\journal{Transportation Research Part B: Methodological}
\newtheorem{theorem}{\bf Theorem}
\newtheorem{lemma}{\bf Lemma}
\newtheorem{proposition}{\bf Proposition}
\newtheorem{assumption}{Assumption}
\newtheorem{remark}{Remark}
\newtheorem{definition}{Definition}
\def\QED{~\rule[-1pt]{5pt}{5pt}\par\medskip}
\renewenvironment{proof}{{\bf Proof: \ }}{ \hfill \QED}
\newcommand{\transpose}{\mathsf{T}}
\newcommand{\ind}[1]{\mathds{1}_{\{#1\}}}
\newcommand{\beq}[1]{\begin{linenomath} \begin{equation} #1 \end{equation}\end{linenomath}}
\newcommand{\beqq}[1]{\begin{linenomath} \begin{equation*} #1 \end{equation*}\end{linenomath}}
\renewcommand{\Re}{\mathbb{R}}
\renewcommand{\forall}{\text{ for all }}
\begin{document}
	
	\begin{frontmatter}
		
		\title{Incentive Design and Profit Sharing in Multi-modal Transportation Networks}

		\author[mymainaddress]{Yuntian Deng\corref{mycorrespondingauthor}}\ead{deng.556@osu.edu}
		
		\author[mymainaddress]{Shiping Shao}\ead{shao.367@osu.edu}
		
		\author[mythirdaddress]{Archak Mittal}\ead{amittal9@ford.com}
		
		\author[mythirdaddress]{Richard Twumasi-Boakye}\ead{rtwumasi@ford.com}
		
		\author[mythirdaddress]{James Fishelson}\ead{jfishels@ford.com}
		
		\author[mymainaddress]{Abhishek Gupta}\cortext[mycorrespondingauthor]{Corresponding author}\ead{gupta.706@osu.edu}
		
		\author[mymainaddress,mysecondaryaddress]{Ness B. Shroff}\ead{shroff.11@osu.edu}

		\address[mymainaddress]{Department of Electrical and Computer Engineering, The Ohio State University, Columbus, OH 43210, USA}
		\address[mythirdaddress]{Ford Motor Company, 20000 Rotunda Dr, Dearborn, MI 48124, USA}
		
		\address[mysecondaryaddress]{Department of Computer Science and Engineering, The Ohio State University, Columbus, OH 43210, USA}

		\begin{abstract}
			We consider the situation where multiple transportation service providers cooperate to offer an integrated multi-modal platform to enhance the convenience to the passengers through ease in multi-modal journey planning, payment, and first and last mile connectivity. This market structure allows the multi-modal platform to coordinate profits across modes and also provide incentives to the passengers. Accordingly, in this paper, we use cooperative game theory coupled with the hyperpath-based stochastic user equilibrium framework to study such a market. We assume that the platform sets incentives ({ price discount or excess charge on passengers}) along every edge in the transportation network. We derive the continuity and monotonicity properties of the equilibrium flow with respect to the incentives along every edge. The optimal incentives that maximize the profit of the platform are obtained through a two time-scale stochastic approximation algorithm. We use the asymmetric Nash bargaining solution to design a fair profit sharing scheme among the service providers. We show that the profit for each service provider increases after cooperation on such a platform. {Finally,} we complement the theoretical results through two numerical simulations.
		\end{abstract}
		
		\begin{keyword}
			Multi-modal Transportation, Stochastic User Equilibrium, Game Theory, Hyperpath.
		\end{keyword}
		
	\end{frontmatter}
	

	\section{Introduction}
	Myriad options are now available for passengers to commute between different places. At the moment, all these services are owned and operated by distinct competitors, where every firm is trying to attract as many customers as possible. Such competition, while beneficial to passengers economically, is resulting in a huge cost to society and the city. For example, there is an increase in traffic congestion due to increased vehicle miles traveled by ridesharing services \cite{li2016demand}, interactions between various modes of transportation, and a shift from sustainable public transportation to individual ridesharing vehicles. 
	
	Meanwhile, one of the grand challenges outlined in the European Union's study on the transportation sector \cite{website2019} is to develop incentive schemes to encourage cooperation between various modes of transportation. {Such cooperative multimodal transportation,  not only reduces the traffic congestion \cite{litman2017introduction},  but also is beneficial for the environment and public health \cite{zhang2018exploring}}. There is also significant interest in integrating shared mobility with the public transit systems throughout the country, as indicated by the U.S. Department of Transportation \cite{mccoy2018integrating}. Public-Private Partnerships (PPPs) attract a lot of interest in various transportation projects \cite{button2016public}. 
	
	A natural question that arises is how service providers decide whether or not to cooperate. Is it possible that their profits decrease after cooperation? More interestingly, how would the total profit be shared if they agree to cooperate? More generally, in a complex network with multiple modes of transportation, how can multiple service providers act in a manner that is beneficial to the city, in terms of lower congestion, lower emissions, efficient use of road capacity, etc. Typically, if the price and convenience of service go down, then its consumption increases, which is termed as induced demand. Thus, it becomes unclear if the increase in consumption is beneficial to the larger public or not. For instance, if the cost of ride-hailing services decreases, then there will be increased usage of vehicles carrying one or two passengers, which will in turn increase the congestion in the city. 
	
	Accordingly, the goal of this paper is to present a theoretical framework to analyze the market aspect of possible cooperation between various competing modes of transportation. We explicitly account for induced demand attributed to the cooperation. Our optimization formulation is such that the total profit of the platform increases after cooperation. We then use asymmetric Nash bargaining solution to design the profit-sharing scheme among the individual service providers. To the best of our knowledge, this is the first study to devise an algorithm that computes the passengers' incentives in a multi-modal transportation system and designs the associated profit sharing scheme for the service providers.

	\subsection{Prior Work}
	With the popularity of shared mobility such as bike, scooter, car, etc., multi-modal transportation is becoming more common \cite{litman2017introduction}, where at least two different modes of transport are involved. In this work, we study how the equilibrium flow changes in a multi-modal transportation network when service providers cooperate and incentives are applied, as well as the resulting profit-sharing scheme.   
	As such, the problem studied here draws inspiration from various distinct strands of recent research on transportation markets--ones that propose induced flow models, pricing of services, and collaborative transportation in which service providers collaborate--which we review below.
	
	\subsubsection{Hyperpath}
	The concept of a hypergraph and optimal hyperpaths have been extensively researched in the transportation literature. Hyperpaths are efficient ways for considering multi-modal networks and trips as unlike the perpetual viability of a monomodal path, multi-modal paths must respect the constraints of available modal options to achieve a viable path \cite{lozano2002shortest}. The constrained hyperpaths which present the constituent paths within the hyperpaths are viable, thus can be taken by the user. Previous literature considers hyperpaths from the perspective of a single mode of public transportation which is analogous to a monomodal hyperpath \cite{nguyen1989hyperpaths}, while others expand the problem to find a solution for multi-modal networks with mode transfers at nodes \cite{ziliaskopoulos2000intermodal}. Studies also address solving traffic assignments particularly for transit models \cite{marcotte1998hyperpath}, and also with a focus on the least cost for transit networks \cite{verbas2015finding}.  While existing tools and efforts reveal the depth of knowledge in the area, advanced methodologies for integrating { incentive designing and }cooperative profit sharing in the hyperpaths framework is understudied. This aspect requires more dynamism in determining the optimal hyperpath for a user, based on multiple service provider offerings and incentivizing the competing providers to cooperate.
	
	\subsubsection{Induced demand}
	{ Induced demand, is defined as a movement along the travel demand curve \cite{lee1999induced}. } For example, additional capacity stimulates corresponding increases in demand, which has been supported by recent evidence \cite{website2018}. Travel demand is the result of a combination of exogenous factors and endogenous factors, where the former determines the location of the demand curve, and the latter determines the price-volume point along the demand curve. The short-run elasticity considers the situation where some long-time characteristics do not change, such as the size and fuel efficiency of vehicles, locations of employment, and population growth. The long-run, on the other hand, considers both exogenous and endogenous conditions and usually is longer than one year. { Induced demand is also named as generated traffic \cite{litman2001generated}, which describes the additional vehicle travel that results from a transportation improvement.} { Two sources of induced demand is distinguished in \cite{hymel2010induced}:} those that occur in undeveloped areas when new locations are made more accessible, and those that occur in urban areas because increased capacity tends to reduce congestion, which attracts more traffic. { In this paper, we focus on the latter, where the  integrated multi-model platform attracts more traffic, as it enhances the convenience to passengers and reduces the route costs. }

	\subsubsection{Pricing of transportation platforms}
	As Uber and Lyft are offering dynamic pricing on the ridesharing services, much attention has been paid to the pricing policies of ride-hailing platforms \cite{agatz2012optimization}, which motivates us to design a link incentive mechanism for multi-modal transportation networks. \cite{banerjee2016pricing} develops a general approximation framework for designing pricing policies in shared vehicle systems, which provides the first efficient algorithm with rigorous approximation guarantees for throughput and revenue. In \cite{ma2018spatio}, a complete-information, discrete-time, multi-period, multi-location model is proposed and they also show that following the mechanism's dispatches forms a subgame-perfect equilibrium among the drivers under the Spatio-temporal pricing mechanism. Further, a multi-region and multi-modal transportation system with park-and-ride facilities are proposed in \cite{liu2017doubly}. Given that the travelers can adjust their mode choices from day to day, an adaptive mechanism is developed to update parking pricing (or congestion pricing) from period to period. Under a pricing model with detour-based discounts for passengers, \cite{biswas2017impact} provides an efficient solution to maximize profit for commercial ridesharing platforms.   { However, these models only consider the interest of specific companies (e.g., Uber/Lyft), which may drive the entire system into an inefficient state and benefit these companies at the expense of social efficiency. In this paper, we consider the interest of whole multi-model system, including  taxi, bus, subway and bike etc, and develop policies beneficial to the whole society. }
	
	\subsubsection{Collaborative transportation}
	Collaboration between two or more agents is becoming an important approach to finding efficient solutions in improving transportation service quality in big cities. { A comprehensive survey on cost allocation methods  is provided in  \cite{guajardo2016review}}. In \cite{agarwal2010network}, authors study alliance formation among carriers in liner shipping and design an allocation mechanism, which provides side payments to the carriers to motivate them to act in the best interest of the alliance. In \cite{dai2012profit}, a carrier collaboration problem in the pickup and delivery service is studied and three profit allocation mechanisms are provided, which are based on the Shapley value, the proportional allocation, and the contribution of each carrier. In \cite{yang2016collaborative},   the Nash bargaining model is adopted to determine profit allocation and a transfer payment contract including fixed and variable fees are characterized between two logistics service providers for collaborative distribution. {Our model generalizes this result to multiple service providers through an asymmetric Nash bargaining solution.}
	
	{ 
		\subsubsection{Pricing with Stochastic User Equilibrium}
		
		Link-based incentives is widely discussed with Stochastic User Equilibrium  (SUE). For a given demand, the existence of link tolls is shown in \cite{smith1995existence}, which drives the SUE link flow to a system optimum, such as minimizing average user costs.  Based on this,   marginal-cost link  toll is demonstrated as a good solution under logit-based SUE towards system optimum \cite{yang1999system}.  However, it is rarely implemented in practice, because it requires every link to be charged. Instead, four alternative toll solutions are proposed through different objectives under fixed or elastic demand \cite{liu2014toll}. These alternatives obtain the same link flows with marginal cost toll, and significantly reduce the number of charged links and total charged toll.  However, the marginal-cost toll cannot be negative, as it is set with the derivative of increasing cost function with respect to link flow, e.g., equation (18) in \cite{liu2014toll} according to marginal cost functions. Besides, in the elastic demand case (SUE), there is actually no active constraint because of decreasing demand function (Equation (55) in \cite{liu2014toll}). Our work generalizes current results through addressing both challenges: i) we allow negative tolls (price discount for passengers) without using marginal cost function  and ii) we introduce general SUE constraints for elastic demand. Further, our model works for  general nonlinear objective functions and we provide an efficient algorithm to solve the problem.}
	
	{ 
		\subsubsection{Multimodal mobility systems and MaaS}
		Multimodal mobility systems are an emerging service that leverages both traditional transit systems and emerging Mobility as a Service (MaaS) applicants.  Focusing on maximizing the overall welfare, several control patterns are introduced in \cite{luo2021multimodal}, including transit service, rebalancing flows and passenger's mode choice. The objective function is linear with respect to these control parameters and decomposed into three problems with approximation, under a fixed demand. Instead of pricing on travel modes, an auction-based online MaaS mechanisms is introduced in 	\cite{xi2020incentive}, where users can submit several bids with different travel time and willings to pay. This problem is then modeled as mixed-integer linear programming to maximize social welfare by allocating mobility resources to users.  As the road space is limited, a design of allocation of ubrban road capacity in the multimodal system is discussed in 	
		\cite{zhang2018systematic}, where private and public transport modes have separated road space. A steady-state equilibrium traffic
		characteristics is derived. Further, there are other approaches that support MaaS, such as trip planning, fare integration and data sharing  \cite{shaheen2020mobility}. Different from previous literature,  this work focuses on maximizing the overall profit of the multi-modal platform through link incentives, under some passenger-friendly constraints. }

	\subsection{Key Contributions of This Paper}
	In this paper, we model the economic interactions over a platform comprising of a coalition of distinct service providers and passengers in a multi-modal transportation network. Incentives are designed to encourage competing providers to cooperate and attract passengers to take multi-modal routes. The main contributions of this paper are summarized below. 
	\begin{itemize}
		\item We first recall the main result of \cite{cantarella1997general} that establishes the stochastic user equilibrium in the multi-class hyperpath-based multi-modal transportation network. That paper establishes the existence and uniqueness of the equilibrium flow under elastic demand. We also show that the equilibrium flow is continuous with respect to the link incentives and monotonic with respect to link cost under certain assumptions on the link cost function and travel demand function.
		
		\item We pose the platform's problem of inducing cooperation among the service providers as a profit maximization problem under {passenger-friendly} constraints. We then develop a two time-scale stochastic approximation algorithm to determine the optimal incentives and associated equilibrium flow. The convergence of the {proposed} algorithm is proved under certain assumptions in Theorem \ref{thm:twotimescale}. A large scale numerical simulation with multiple origin-destination pairs shows the effectiveness of our algorithm, as both flow error and incentive error go to zero as the number of iteration grows.
		
		\item We leverage an asymmetric Nash bargaining solution to design a fair profit-sharing scheme for service providers after cooperation in Theorem \ref{thm:sharing}. We {provide} a closed-form solution to derive the profit shared with each service provider and show that each provider improves its total profit through cooperation.
	\end{itemize}

	The rest of the paper is organized as follows. The system model is presented in Section II. Section III discusses the stochastic user equilibrium under different incentives. The properties of the equilibrium are stated in Section IV. Section V discusses the optimal pricing and the two time-scale algorithm. The profit-sharing scheme is stated in Section VI. Section VII shows the numerical simulation. Finally, Section VIII concludes the paper.

	\section{System Model}
	In this section, we state the system model for a multi-modal transportation network with several modes of transportation and service providers. Within one origin-destination pair,  passengers have multiple choices, considering financial costs, travel time, and level of inconvenience. If all the service providers cooperate and are incorporated by one platform, then the systematical quality-of-service (congestion, passenger's satisfaction, profit) can be largely improved by intelligently setting the incentives.

	The transportation network is modeled as a graph  $\mathcal{G}:=(\mathcal{Z},\mathcal{L})$, where $\mathcal{Z}$ denotes the set of nodes and $\mathcal{L}=\{l_{ij}|i,j \in \mathcal{Z} \}$ denotes the set of directed links connecting two nodes in $\mathcal{Z}$. Each link $l_{ij}$ represents that the passenger boards at node $i$ and alights at node $j$. We are interested in all the possible paths that start at the origin (node $o$) and ends at the destination (node $d$) with any combination of modes\footnote{In a real-world implementation, the paths that are too cumbersome to take can be removed from the consideration.}. Throughout the paper, we assume multi-origin and multi-destination traffic flows. Figure \ref{fig:hyperpath} depicts a simple network with a single origin and destination.
	
	\begin{figure}[bth]
		\centering
		\scalebox{0.6}{\begin{tikzpicture}[scale=0.95]

\node[state] at (0, 0) (o) {$o$};

\node[state] at (11, 0) (d) {$d$};

\node[state] at (3, 0) (1) {$1$};
\node[state] at (3, -2) (2) {$2$};
\node[state] at (6, -0) (3) {$3$};

\node[state] at (9, -2) (4) {$4$};
\node[state] at (9, -4) (5) {$5$};

\draw[every loop, thick]
(o) edge[solid, green, bend left,auto = left] node {$(\$30, 27')$} (d)
(o) edge[densely dotted, cyan, auto = left] node {$( \$2, 20')$} (1)
(1) edge[densely dotted, cyan, auto = left] node {$( \$2, 22')$} (3)
(3) edge[densely dotted, cyan, bend left, auto = right] node {$( \$2, 17')$} (4)
(4) edge[densely dotted, cyan, bend right, auto = right] node {$( \$3, 15')$} (d)
(4) edge[densely dotted, cyan, bend left, auto = left] node {$( \$2, 25')$} (d)
(o) edge[dashdotted, blue, bend right, auto = right] node {$( \$1, 21')$} (2)
(5) edge[dashdotted, blue, bend right, auto = right] node {$( \$1, 22')$} (d)
(2) edge[dashed, red, bend right, auto = left] node {$( \$3, 25')$} (5)
(1) edge[double, black, bend right, auto = right] node {$( \$2, 5')$} (2)
(2) edge[double, black, bend right, auto = right] node {$( \$2, 5')$} (1)
(4) edge[double, black, bend right, auto = right] node {$( \$2, 5')$} (5)
(5) edge[double, black, bend right, auto = right] node {$( \$2, 5')$} (4);


\coordinate (L) at (-1.5,-6);
\draw[solid, green, thick] (L) -- + (1,0)               node[right] {: Taxi};
\draw[densely dotted, cyan, thick]  ([xshift=25mm] L) -- + (1,0) node[right] {: Bus};
\draw[dashed, red, thick]  ([xshift=50mm] L) -- + (1,0) node[right] {: Subway};
\draw[dashdotted, blue, thick]  ([xshift=80mm] L) -- + (1,0) node[right] {: Bike};
\draw[double, black, thick]  ([xshift=105mm] L) -- + (1,0) node[right] (LL) {: Scooter};
\node[draw, yshift=0.5ex, fit=(L) (LL)] {};
\end{tikzpicture}}
		\caption{\label{fig:hyperpath} Multi-modal transportation network with price and travel time on each link.}
	\end{figure}
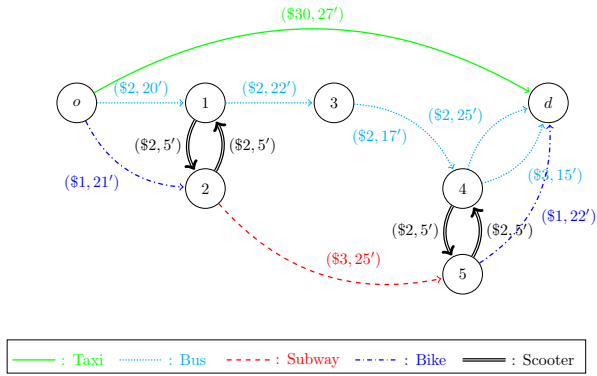

	\subsection{Model for Multi-modal Transportation Network}
	In this subsection, we introduce the model of a multi-modal transportation network and state the definition of link, path, and hyperpath, respectively.
	
	\subsubsection{Link}
	As shown in Figure \ref{fig:hyperpath}, each link represents one specific mode of transportation from one node to another in the multi-modal transportation network.
	
	Let $L= |\mathcal{L}|$ denote the total number of links in the system. Rewrite all the links as $\mathcal{L} =\{l_1, l_2, \ldots, l_L\}$. We assume that flow (number of passengers passing by) on link $l_i \in \mathcal{L}$ is $f_i, i \in \{1, \ldots, L\}$.  We set system link flow vector as $\vec f = [f_1, \ldots, f_L]^\transpose \in \Re ^L$. The associated link cost $c_i(\vec f)$ is defined as a function of flow, which is congestion-dependent and can be decoupled as financial cost (price) $c_{p,i}(\vec f)$ and travel time cost $c_{t,i}(\vec f)$. The Jacobian of $c_i(\vec f)$ can be asymmetric, which depends on flows of other links. We set  link 
	cost function as $C: \Re^L \rightarrow \Re^L$ and  $C(\vec f) = [c_1(\vec f), \ldots, c_L(\vec f)]^\transpose$. 
	
	\subsubsection{Path}
	Enumerate all the possible paths from node $o$ to node $d$ exhaustively. We define the set of all the possible paths as $\mathcal{N} \subseteq \mathcal{G}$. The elementary path is defined as follows:
	\begin{definition}
		$n:=(\mathcal Z_n,\mathcal L_n)\in \mathcal{N}$, where $\mathcal Z_n \subseteq \mathcal Z$, $\mathcal L_n \subseteq \mathcal{L}$, is an elementary path (usually called path)  if
	\end{definition}
	\begin{enumerate}
		\item $n$ starts from node $o$ and ends at node $d$;
		\item $n$ is acyclic and $n \neq \emptyset$.
	\end{enumerate}
	Passengers can follow any path $n \in \mathcal{N}$ in order to arrive at node $d$ starting from node $o$ and usually we require that there is at least one path, i.e. $|\mathcal{N}|\geq 1$. For example, in Figure \ref{fig:hyperpath}, passengers can ride a shared bike to node $2$, then take a subway to node $5$ and ride another shared bike towards the destination node $d$. 
	
	Denote the set of elementary paths as $\mathcal{N}$ and $N=|\mathcal{N}|$ is the total number of elementary paths. Let matrix $A \in \{0,1\}^{L*N}$ denotes the mapping from links to paths such that $a_{ln} = 1$ only when link $l \in \mathcal{L}$ belongs to path $n \in \mathcal{N}$.

	\subsubsection{Hyperpath (route)}
	The choice of links actually followed from origin to destination comes from both the pre-trip choice (choice made before trip starts) and the en-route choice (choice made during the trip). We model the pre-trip choice with the Random Utility model in the next subsection and the en-route choice in this subsection with a hyperpath framework, which succinctly describes the multiple paths from source to destination along with the probability with which these paths will be taken. 
	
	\begin{definition}
		Consider a directed subgraph and an associated probability vector, denoted by $m := (\mathcal Z_m, \mathcal L_m, \pi_m)$, where $\mathcal Z_m \subseteq \mathcal Z$, $\mathcal L_m \subseteq \mathcal L$, and $\pi_m=(\pi_{i,l,m})\in[0,1]^{|\mathcal L_m|}$. Then $m$ is a hyperpath (route) if:
		\begin{enumerate}
			\item Each path in $m$ is acyclic and $m \neq \emptyset$;
			\item For every node $i\in \mathcal Z_m -\{o,d\}$, there exists an elementary path $n$ traversing through $i$;
			\item The link diversion probability vector satisfies \beqq{\sum_{l \in \mathcal L_m } \pi_{i,l,m}=1,  \forall i \in \mathcal Z_m -\{d\}, i \textit{ is the origin of link } l} \beqq{\pi_{i,l,m} \geq 0,  \forall l\in \mathcal L_m}
		\end{enumerate}
	\end{definition}

	In other words, a hyperpath is a collection of multiple elementary paths along with an associated probability distribution.
	For example, as shown in Figure \ref{fig:hyperpath}, consider all the bus links as one hyperpath. Passengers can take the bus at node $o$, 1, 3, 4 and $d$. Node $4$ is a diversion node with $\pi_{4,l_1}+\pi_{4,l_2}=1$. Passengers at node 4 in this hyperpath will choose the upper link $l_1$ with probability $\pi_{4,l_1}$ and the lower link $l_2$ with probability $\pi_{4,l_2}$. Usually $\pi$ is pre-determined and derived from a schedule or frequency-based system such as bus and subway.
	
	Let $e_{nm}$ denote the en-route choice probability of path $n$ belonging to hyperpath $m$.  We have $e_{nm}=\prod_{l \in n} \pi_{i,l,m}$, i.e., it is the product of all link diversion probabilities along path $n$ within hyperpath $m$. In other words, with probability $e_{nm}$, passengers in hyperpath $m$ will follow the elementary path $n$. Let matrix $E=[e_{nm}]\in \{0,1\}^{N\times M}$ denote the relationship between elementary path and hyperpath.

	Let us index the hyperpaths(routes) $\mathcal{M}$ with $1,\ldots, M$, where $M=|\mathcal{M}|$ is the total number of routes. Consider the vector $\vec x=[x_1,...,x_M]^\transpose \in \Re^M$, which denotes the flow on each route. 
	
	Let matrix $B \in [0,1]^{L\times M}$ denotes the mapping from links to routes (traversing probability), such that route $m$ traverses link $l$ with traversing probability $b_{lm}$ where $0 \leq b_{lm} \leq 1$. As $b_{lm}=\sum_{n} a_{ln} e_{nm}$, we have $B=AE$. Therefore, the link flow vector follows 
	\beq{\vec f = B\vec x}
	
	For each route $m \in \mathcal{M}$, let $v_m(\vec x)$ denote the utility flow function on route $m$. This utility includes the real cost such as the price of the total trip, the time cost in equivalent currency, and notional costs such as the mode transfer cost, potential accident costs, and the cost associated with inconvenience due to the presence of traffic on that link (number of passengers in the bus/subway, traffic jams, etc.).  Let the route specific cost (transfer cost or cost of inconvenience) be $\vec c_s= [c_{s,1}, \ldots, c_{s,M}]^\transpose \in \Re^M$, which  depends on the topology. For any link cost vector $\vec c \in \Re^L$, define the route utility function as $V:\Re^L \rightarrow \Re^M$ and $V(\vec c)=[v_1(\vec x),\ldots,v_M(\vec x)]^\transpose$.  We have the route utility as follows
	\beq{\label{equ:utility}
		V(\vec c)= -\beta B^\transpose \vec c - \vec c_s }
	where $\beta$ is a constant. The first part of utility is the weighted sum of flow-dependent link costs along the route, and the second part of utility is the route specific cost, which only depends on the topology of the transportation network.

	\subsection{Random Utility Choice Model}
	In this subsection, we use the Random Utility Choice theory to model the pre-trip choice of passengers in this system.
	
	We assume that passengers make decisions about which route (hyperpath) they will take at the start of the trip and they do not change route during the trip. The en-route choice defined in the last subsection does not affect the chosen route but only affects which link to take at a node that belongs to the route. For example,  as shown in Figure \ref{fig:hyperpath}, a passenger can make a choice at the start of the trip. Once he chooses to take a route $o \rightarrow{} 1 \rightarrow{} 3 \rightarrow{} 4 \rightarrow{} d$,  he can take one of two bus lines (en-route choice) between place node $4$ and node $d$. 
	
	Discrete choice methods \cite{train2009discrete}, which is an empirical analysis of discrete choice and popular in marketing, is used to model the passenger's behavior among the set of choices (routes) $\mathcal{M}$. If there is any change in terms of the utility function $V(\vec c)$, there will be some movements among the final choices. 
	
	In the discrete choice model, there are some latent (unobserved) attributes affecting the choice of each decision-maker, such as income level and his/her preferences on driving or not driving. Consider a decision maker facing a choice among $M$ alternatives (routes). Let $v_m(\vec x)$ (defined above) be the observed utility and let $\epsilon_m$ be the unobserved utility of the $m$-th alternative, that is, $v_m(\vec x)$ does not vary from person to person, and $\epsilon_m$ is a random variable with a given distribution. From the Random Utility Models (RUMs), we have
	\beqq{
		u_m=v_m+\epsilon_m,  \text{\space}  \forall m=1,...,M
	}
	where $u_m$ is the perceived utility that the decision-maker obtains after following the $m$-th route.
	
	In order to simplify the analysis, we use logit model as it takes a closed form and is readily interpretable \cite{train2009discrete}. The unobserved utility across routes follows a type I extreme value (Gumbel) distribution. The density of $\epsilon_m$ is:
	\beqq{
		f_{\epsilon}(\epsilon_{m})= e^{-\epsilon_m} e^{-e^{-\epsilon_m}}
	}
	For a given route utility vector $\vec v=[v_1, \ldots, v_M]^\transpose$, the choice probability of route $m$ $\forall m \in [1,\ldots,M]$ is 
	\beq{ \begin{split} {\label{equ:pro}}
			p_{m}(\vec v)&=Prob(v_m+\epsilon_{m}>v_j+\epsilon_{j},\forall j \neq m) \\
			&=\int \ind{\epsilon_j < \epsilon_m+ v_m - v_j,\forall j \neq m} f_{\epsilon}(\epsilon) d\epsilon 
			= \frac{e^{v_m}}{\sum_{j=1}^M e^{v_j}}
		\end{split}
	}
	
	We also define the route choice function as $P: \Re^M \rightarrow [0,1]^M$ and $P(\vec v)=[p_1(\vec v),\ldots,p_M(\vec v)]^\transpose$, which is normalized by definition, i.e., $\sum_{m=1} ^{M} p_{m}(\vec v)=1$.

	\subsection{Elastic Demand}
	Intuitively, if the cost of this trip decreases, more people would like to enter this market and generate new trips. We use the elastic demand to capture this characteristic.
	
	Let map $S:\Re^M \rightarrow \Re$ denote the satisfactory level of passengers and $S(\vec v)= \mathbb{E}[ \max_m \{u_m\}] = \mathbb{E}[ \max_m \{v_m + \epsilon_m\}]$, where $v_m$ is the route utility of $m$-th choice  and $\epsilon_m$ is the unobserved utility.
	
	Let function $D:\Re \rightarrow \Re^+$ denote the elastic demand function. For a {given} satisfaction level $s$, $D(s)$ is the total number of passengers who enter this multi-modal transportation system.
	

	\section{Preliminaries: Stochastic User Equilibrium}
	In this section, under some assumptions about cost functions and user choice behaviors, we discuss the traffic assignment problem and leverage a fixed point framework to find the equilibrium link flow. {We mainly recall the result in \cite{cantarella1997general}}.
	
	\subsection{Traffic Assignment}
	
	Here, we state the flow calculation under a fixed link cost vector $\vec c$. 
	
	Let $\mathcal D = \{ \hat{d} \in [0, \hat{d}_{max}]\} \subset \Re^+$ denote the set of feasible demands, where $\hat{d}_{max}$ is the total population at one area. Let $\mathcal X=\{\vec x \geq 0, \sum_{i=1}^M x_m = \hat{d} \in \mathcal D\} \subset \Re^M$ be the set of feasible route flows and we have $\vec x=\vec p \hat{d}$ for a given route choice probability vector $\vec p=[p_1, \ldots, p_M]^\transpose$ and demand $\hat{d}$. We further define the set of feasible link flows as $\mathcal F=\{\vec f=B\vec x, \vec x \in \mathcal X\} \subset \Re^L$ and the set of feasible link costs  is $\mathcal C =\{\vec c=C(\vec f): \vec f \in \mathcal F\} \subset \Re^L$.

	For any given link cost vector $\vec c \in \mathcal C$, we define the traffic assignment function as $g: \mathcal C \rightarrow \mathcal F$, mapping from the link cost vector $\vec c$ to the link flow vector $\vec f$, developed from $\vec f=B\vec x$ and $\vec x=\vec p d$:
	\beq{g(\vec c)=B P(V(\vec c ))) D(S(V(\vec c)))  
	}
	where $B$ is the mapping from link to route. $P(\vec v)$ is the route choice function and $D(s)$ is the demand function of the system.  $V(\vec c)$ is the route utility function defined in Equation \ref{equ:utility} and  $S(\vec v)$ is the satisfactory function based on the route utility.

	\subsection{Fixed point formulation}
	As link cost $\vec c$ is a function of the link flow $\vec f$, the actual link flow $\vec f^*$ in the multi-modal transportation network with elastic demand will be the equilibrium flow under a certain link cost function $C$. We express it as a fixed point problem in the link flow space $\mathcal F$.

	We combine the traffic assignment mapping $g$ with the cost flow function $C (\vec f)$. For a fixed route specific cost vector $\vec c_s$, let $T: \mathcal F \rightarrow \mathcal F$ denote the composite function $g$ and $C$, which computes the new flow on each link based on the previous flow and associated cost. We have 
	\beq{\label{equ:t}
		T (\vec f) := g(C (\vec f))=B  P(V(C(\vec f) ))) D(S(V(C(\vec f) ))), \forall \vec f \in \mathcal F
	}
	
	We define the corresponding equilibrium link flow $\vec f^*$ as the fixed point of mapping $T$, i.e., $T(\vec f^*)=\vec f^*$. The following lemma concludes the existence of such an equilibrium flow {$\vec f^*$ under a certain traffic assignment function $g$.} 
	
	\begin{lemma}
		If the link flow set $\mathcal F$ is a convex compact subset of $\Re^L$, the cost-flow function $C (\vec f)$, utility function $V(\vec c)$, satisfactory function $S(\vec v)$, choice function $P(\vec v)$ and demand function $D(s)$ are continuous, then at least one equilibrium flow $\vec f^*$ exists.
	\end{lemma}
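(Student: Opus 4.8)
The plan is to read the equilibrium condition $T(\vec f^*)=\vec f^*$ as a fixed point equation for the self-map $T$ of Equation~\ref{equ:t}, and to invoke Brouwer's fixed point theorem: every continuous map of a nonempty compact convex subset of $\Re^L$ into itself has a fixed point. It therefore suffices to verify three things: (i) $\mathcal F$ is a nonempty compact convex subset of $\Re^L$; (ii) $T$ is continuous on $\mathcal F$; and (iii) $T(\mathcal F)\subseteq\mathcal F$. Point (i) is essentially given: convexity and compactness of $\mathcal F$ are assumed in the hypothesis (and in fact hold automatically, since $\mathcal X=\{\vec x\ge 0:\sum_{m}x_m\le \hat{d}_{max}\}$ is compact and convex and $\mathcal F=B\mathcal X$ is its image under the linear map $B$), and $\mathcal F$ is nonempty because $0\in\mathcal D$ forces $\vec x=0\in\mathcal X$, so the zero flow lies in $\mathcal F$ (equivalently, $|\mathcal N|\ge 1$ guarantees at least one route).

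For continuity (ii), recall $T=g\circ C$ with $g(\vec c)=B\,P(V(\vec c))\,D(S(V(\vec c)))$. The maps $C$, $V$, $S$, $P$, and $D$ are continuous by hypothesis, and the remaining operations appearing in $g$ and in the definition of $V$ in Equation~\ref{equ:utility} — the linear maps $\vec c\mapsto B^\transpose\vec c$ and $\vec x\mapsto B\vec x$, together with the scalar–vector product $(\hat d,\vec p)\mapsto \hat d\,\vec p$ — are continuous as well. Since compositions, products, and linear maps of continuous functions are continuous, $T$ is continuous on $\mathcal F$.

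For the self-map property (iii), fix $\vec f\in\mathcal F$ and set $\vec c=C(\vec f)\in\mathcal C$ and $\vec v=V(\vec c)$. By Equation~\ref{equ:pro}, $P(\vec v)$ is a probability vector, i.e. $P(\vec v)\ge 0$ and $\sum_{m=1}^{M}p_m(\vec v)=1$; and the induced demand $\hat d:=D(S(\vec v))$ lies in $\mathcal D=[0,\hat{d}_{max}]$, which is precisely the feasibility requirement built into the elastic demand model. Hence $\vec x:=P(\vec v)\,\hat d$ satisfies $\vec x\ge 0$ and $\sum_{m}x_m=\hat d\in\mathcal D$, so $\vec x\in\mathcal X$ and therefore $T(\vec f)=B\vec x\in B\mathcal X=\mathcal F$. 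With (i)–(iii) established, Brouwer's theorem gives $\vec f^*\in\mathcal F$ with $T(\vec f^*)=\vec f^*$, the claimed equilibrium flow.

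The only step that requires any genuine care is the self-map verification in (iii): one must ensure that the elastic demand never drives the iterate out of $\mathcal F$, i.e. that $D(S(V(C(\vec f))))\le \hat{d}_{max}$ for every $\vec f\in\mathcal F$. This is exactly why $\mathcal D$ is taken to be the bounded interval $[0,\hat{d}_{max}]$ and why $D$ is posited to return feasible demands; once this is granted, the continuity of the composite $T$ and the topological properties of $\mathcal F$ are routine, and the existence claim follows immediately.
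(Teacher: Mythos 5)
Your proposal is correct and follows essentially the same route as the paper: apply Brouwer's fixed-point theorem to the continuous self-map $T$ on the convex compact set $\mathcal F$. The only difference is that you spell out the self-map verification (via $P(\vec v)$ being a probability vector and $D(S(\vec v))\in\mathcal D$) and the nonemptiness of $\mathcal F$, which the paper simply asserts; this is a welcome but minor elaboration, not a different argument.
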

	\begin{proof}
		As the composition of continuous functions is continuous, the mapping $T$ in Equation \ref{equ:t} is continuous. Since $\mathcal F$ is convex and compact, and $T$ maps from $\mathcal F$ to itself, from the Brouwer fixed-point theorem, there exists at least one fixed point $\vec f^* \in \mathcal F$ such that $T(\vec f^*)=\vec f^*$, which is the equilibrium flow in the system.
	\end{proof}

	Under some mild assumptions (conditions 1-5 in the following lemma), we can conclude the uniqueness of the equilibrium flow $\vec f^*$, {following results in \cite{cantarella1997general}}.
	\begin{lemma}\label{lemma:unique}
		If 
		\begin{enumerate}
			\item the link cost-flow function $C (\vec f)$ is monotone non-decreasing, i.e., $(C(\vec f_1)-C(\vec f_2))^\transpose (\vec f_1-\vec f_2) \geq 0$,   $ \forall \vec f_1, \vec f_2 \in \mathcal F$,
			\item the utility function $V(\vec c)$ is linearly non-decreasing with route costs $B^\transpose \vec c$, i.e., $V(\vec c)= -\beta B^\transpose \vec c -\vec c_s$ and $\beta \geq 0$,
			\item the choice map $P(\vec v)$ is strictly positive additive, i.e., the probability density function of unobserved utility is strictly positive: $f_{\epsilon}(\epsilon_m) >0, \forall \epsilon_m \in \Re$,
			\item the choice map $P(\vec v)$ is regular, i.e., $P(\vec v_1)^\transpose (\vec v_1-\vec v_2)\geq S(\vec v_1)-S(\vec v_2) \geq P(\vec v_2)^\transpose (\vec v_1-\vec v_2), \forall \vec v_1, \vec v_2 \in \mathcal V$,
			\item the demand function $D(S(\vec v))$ is monotone non-decreasing with route utility vector $\vec v$, $(D(S(\vec v_1))-D(S(\vec v_2)))(S(\vec v_1)-S(\vec v_2)) \geq 0, \forall \vec v_1, \vec v_2 \in \mathcal V$
		\end{enumerate}
		then, there exists at most one equilibrium link flow $\vec f^*$.
	\end{lemma}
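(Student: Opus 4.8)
The plan is direct: assume $\vec f_1^*$ and $\vec f_2^*$ are two equilibrium link flows and show $\vec f_1^* = \vec f_2^*$. The engine is a variational-inequality chain that plays the cost monotonicity (condition 1) against the regularity of the choice map and the demand monotonicity (conditions 4--5), glued together by the linear form of $V$ (condition 2); the strict positivity in condition 3 is what ultimately turns \emph{no gap in the chain} into \emph{equal flows}.

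First I would unwind each equilibrium into its induced quantities: $\vec c_i = C(\vec f_i^*)$, $\vec v_i = V(\vec c_i) = -\beta B^\transpose \vec c_i - \vec c_s$, $\vec p_i = P(\vec v_i)$, $d_i = D(S(\vec v_i)) \ge 0$, and $\vec x_i = d_i \vec p_i$, so that $\vec f_i^* = B \vec x_i$ for $i=1,2$. The case $\beta = 0$ is immediate (then $\vec v_1 = \vec v_2$, hence $\vec x_1 = \vec x_2$), so take $\beta > 0$. The core computation has three links. (i) By condition 1, $(\vec c_1 - \vec c_2)^\transpose (\vec f_1^* - \vec f_2^*) \ge 0$. (ii) Substituting $\vec f_i^* = B \vec x_i$ and $B^\transpose \vec c_i = -\tfrac{1}{\beta}(\vec v_i + \vec c_s)$ rewrites the left-hand side as $-\tfrac{1}{\beta}(\vec v_1 - \vec v_2)^\transpose(\vec x_1 - \vec x_2)$. (iii) Expanding $(\vec v_1 - \vec v_2)^\transpose(\vec x_1 - \vec x_2) = d_1\, \vec p_1^\transpose(\vec v_1 - \vec v_2) - d_2\, \vec p_2^\transpose(\vec v_1 - \vec v_2)$ and applying the regularity sandwich of condition 4 together with $d_i \ge 0$ bounds this below by $(d_1 - d_2)\big(S(\vec v_1) - S(\vec v_2)\big)$, which is nonnegative by condition 5. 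Chaining (i)--(iii), the quantity in (i) is simultaneously $\ge 0$ and $\le 0$, hence $= 0$, and likewise $(\vec v_1 - \vec v_2)^\transpose(\vec x_1 - \vec x_2) = 0$; so every inequality used in (iii) must hold with equality.

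From this tightness I would conclude equality of the flows. Since weak monotonicity of $C$ does \emph{not} give $\vec c_1 = \vec c_2$, uniqueness has to be read off the route-flow side. When $d_1, d_2 > 0$, equality in both regularity inequalities, combined with the strict positivity of the unobserved-utility density (condition 3) --- which makes $S$ strictly convex in every direction transverse to the all-ones vector $\mathbf 1$, with $P = \nabla S$ --- forces $\vec v_1 - \vec v_2 = k\,\mathbf 1$ for some scalar $k$. Shift-invariance of the logit probabilities then gives $\vec p_1 = \vec p_2$, and since $S(\vec v + k\mathbf 1) = S(\vec v) + k$ while $(d_1 - d_2)\big(S(\vec v_1) - S(\vec v_2)\big) = 0$, we get $d_1 = d_2$; hence $\vec x_1 = \vec x_2$ and $\vec f_1^* = B\vec x_1 = B\vec x_2 = \vec f_2^*$. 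The corner case $d_i = 0$ for some $i$, say $d_1 = 0$, is disposed of separately: if $d_2 \neq 0$ then $(d_1-d_2)\big(S(\vec v_1)-S(\vec v_2)\big) = 0$ forces $S(\vec v_1) = S(\vec v_2)$ and hence $d_1 = D(S(\vec v_1)) = D(S(\vec v_2)) = d_2$, a contradiction; so $d_2 = 0$ too and $\vec f_1^* = \vec f_2^* = 0$.

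I expect the last step to be the main obstacle --- converting the scalar identity $(\vec v_1 - \vec v_2)^\transpose(\vec x_1 - \vec x_2) = 0$ into an identity of link flows. This is precisely where condition 3 is indispensable: condition 4 alone can hold with equality for genuinely different $\vec v_1, \vec v_2$ (for instance, ones that differ by a constant vector), so one needs the strict positivity of the density to sharpen the regularity inequalities enough to pin $\vec v_1 - \vec v_2$ down to a direction that is invisible to $P$ and, after invoking condition 5, also to $D$. Everything else is routine bookkeeping with the identities $\vec f = B\vec x$, $\vec x = d\,\vec p$, and the affine form of $V$.
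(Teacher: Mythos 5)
Your argument is correct, but it takes a genuinely different route from the paper: the paper does not prove this lemma internally at all, it simply invokes Theorem 2 (conditions ii.1 and ii.2, via Lemmas 1 and 3) of Cantarella's fixed-point framework, whereas you reconstruct the uniqueness argument from scratch as a variational-inequality/monotonicity chain. Your chain is sound: condition 1 gives $(\vec c_1-\vec c_2)^\transpose(\vec f_1^*-\vec f_2^*)\ge 0$; the affine form of $V$ (condition 2) rewrites this as $-\tfrac{1}{\beta}(\vec v_1-\vec v_2)^\transpose(\vec x_1-\vec x_2)$; the decomposition $d_1[\vec p_1^\transpose\Delta\vec v-\Delta S]+d_2[\Delta S-\vec p_2^\transpose\Delta\vec v]+(d_1-d_2)\Delta S$ together with conditions 4 and 5 makes that quantity nonpositive, so every term vanishes; and the tightness analysis correctly isolates where condition 3 is needed, since for the logit model assumed in the paper the strictly positive Gumbel density gives full-support choice probabilities, $P=\nabla S$, and strict convexity of $S$ along any direction outside $\mathrm{span}(\mathbf 1)$, which pins $\vec v_1-\vec v_2$ to a constant shift and then yields $\vec p_1=\vec p_2$, $d_1=d_2$, hence $\vec x_1=\vec x_2$ and $\vec f_1^*=\vec f_2^*$; the $\beta=0$ and $d_i=0$ corner cases are handled correctly. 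What your approach buys is self-containedness and a transparent accounting of exactly which hypothesis closes which gap (in particular, why regularity alone cannot rule out utility vectors differing by a constant); what the paper's citation buys is brevity and greater generality, since Cantarella's theorem covers arbitrary regular, strictly increasing choice maps and the multi-class extension without relying on the specific logit structure you use in the strict-convexity step. If you wanted your proof to match the stated hypotheses rather than the paper's logit specialization, the only step to strengthen is the passage from condition 3 to strict convexity of $S$ transverse to $\mathbf 1$, which is exactly the content of the cited Lemmas 1 and 3.
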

	\begin{proof}
		This lemma follows condition ii.1) and ii.2) in Theorem 2 \cite{cantarella1997general}, where condition ii.2) follows lemma 1 and Lemma 3 in \cite{cantarella1997general}.
	\end{proof}

	\subsection{Multi-class case}
	As passengers have various preferences and different sensitivity to travel time and financial cost, it is essential to consider the multi-class case, where passengers in one class share similar behavior. If the population is divided into $K$ classes, where each class of passengers has the same demand function $D^{(k)}$, satisfaction function $S^{(k)}$, choice probability $P^{(k)}$, utility function $V^{(k)}$ , route link relationship $B^{(k)}$ and route specific cost $\vec c_s^{(k)}$, then the flow on each link is the sum of the link flow of all classes, as follows.
	
	\beq{\label{equ:multiclass}
		\vec f=\sum_{k=1}^K B^{(k)}  P^{(k)}(V^{(k)}(C(\vec f)))) D^{(k)}(S^{(k)}(V^{(k)}(C(\vec f) )))
	}
	The link cost function $C(\vec f)$ is consistent across all class $k$. {Based on  results in \cite{cantarella1997general}, the following lemma concludes the uniqueness of the equilibrium flow $\vec f^*$ in the multi-class case.}
	
	\begin{lemma}
		If for each class $k$, demand function $D^{(k)}$, satisfaction function $S^{(k)}$, choice probability $P^{(k)}$, utility function $V^{(k)}$ satisfy the conditions 2 3 4 5 in Lemma \ref{lemma:unique}, and $C(\vec f)$ satisfies condition 1 in in Lemma \ref{lemma:unique},
		then, there exists at most one equilibrium link flow $\vec f^*$.
	\end{lemma}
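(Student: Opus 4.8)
The plan is to obtain this from the same result used for Lemma~\ref{lemma:unique}, namely Theorem~2 of \cite{cantarella1997general}, but invoked now in its multi-user (multi-class) form, after checking that hypotheses ii.1) and ii.2) there still hold. Hypothesis ii.1) constrains only the link cost-flow function $C(\vec f)$, which is common to all classes and unchanged here, so it holds by condition~1. Hypothesis ii.2) is the monotonicity of the aggregate demand-and-choice map $\vec c \mapsto \sum_{k=1}^K B^{(k)} P^{(k)}(V^{(k)}(\vec c)) D^{(k)}(S^{(k)}(V^{(k)}(\vec c)))$, and this is the only point that needs work.

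To verify ii.2) I would argue one class at a time. For each $k$, conditions 2--5 are precisely the hypotheses under which Lemmas~1 and 3 of \cite{cantarella1997general} make the per-class map $\vec v \mapsto P^{(k)}(\vec v) D^{(k)}(S^{(k)}(\vec v))$ monotone in the route-utility argument; since $V^{(k)}(\vec c) = -\beta^{(k)} (B^{(k)})^\transpose \vec c - \vec c_s^{(k)}$ is affine with $\beta^{(k)} \ge 0$, the composite $\vec c \mapsto B^{(k)} P^{(k)}(V^{(k)}(\vec c)) D^{(k)}(S^{(k)}(V^{(k)}(\vec c)))$ is then monotone in $\vec c$. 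Monotonicity is closed under summation, so the aggregate map is monotone, ii.2) holds, and Theorem~2 of \cite{cantarella1997general} delivers at most one equilibrium link flow $\vec f^*$. (One might instead hope to reduce verbatim to the single-class Lemma~\ref{lemma:unique} by stacking the $K$ classes into one composite \emph{super-class}; this fails only because the scalar demand functions $D^{(k)}$ do not fuse into a single scalar demand, which is exactly why the multi-user version of the theorem is the right vehicle.)

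The part I would actually write out in detail is propagating the \emph{strictness} that turns monotonicity into uniqueness through the sum over classes. The self-contained version of that step runs as follows: suppose $\vec f_1^*$ and $\vec f_2^*$ are two equilibria and put $\vec c_i = C(\vec f_i^*)$, $\vec v_i^{(k)} = V^{(k)}(\vec c_i)$, and $\vec x_i^{(k)} = P^{(k)}(\vec v_i^{(k)}) D^{(k)}(S^{(k)}(\vec v_i^{(k)}))$; condition~1 gives $(\vec c_1 - \vec c_2)^\transpose (\vec f_1^* - \vec f_2^*) \ge 0$, while expanding $\vec f_i^* = \sum_k B^{(k)} \vec x_i^{(k)}$ and using the affine form of each $V^{(k)}$ rewrites the same quantity as $-\sum_k \frac{1}{\beta^{(k)}} (\vec v_1^{(k)} - \vec v_2^{(k)})^\transpose (\vec x_1^{(k)} - \vec x_2^{(k)})$ (classes with $\beta^{(k)} = 0$ contribute nothing and trivially have $\vec v_1^{(k)} = \vec v_2^{(k)}$). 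Each term $(\vec v_1^{(k)} - \vec v_2^{(k)})^\transpose (\vec x_1^{(k)} - \vec x_2^{(k)})$ is nonnegative by the per-class monotonicity, so the two evaluations of this quadratic form — one $\ge 0$, the other $\le 0$ — force every term to equal $0$; the strict positivity of the choice density (condition~3) together with monotone demand (condition~5) then upgrades this to $\vec x_1^{(k)} = \vec x_2^{(k)}$ for every $k$, whence $\vec f_1^* = \sum_k B^{(k)} \vec x_1^{(k)} = \sum_k B^{(k)} \vec x_2^{(k)} = \vec f_2^*$. The main obstacle is exactly this squeeze: verifying that the degeneracy created by summing over classes (the aggregate form can vanish without any individual utility difference vanishing) does not prevent concluding equality class by class — and it does not, because each per-class term is separately nonnegative.
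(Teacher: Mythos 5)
Your proposal is correct and follows essentially the same route as the paper, whose proof is a one-line appeal to conditions ii.1) and ii.2) of Theorem~2 in \cite{cantarella1997general}, with ii.2) obtained per class from Lemmas~1 and~3 there and preserved under summation over classes. Your second paragraph simply writes out explicitly the squeeze and strictness-propagation argument that the paper delegates to that citation, and it is sound (the upgrade to $\vec x_1^{(k)}=\vec x_2^{(k)}$ uses the strict form of the regularity inequality, which is exactly what condition~3 supplies via Cantarella's Lemma~1).
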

	\begin{proof}
		This lemma follows condition ii.1) and ii.2) in Theorem 2 \cite{cantarella1997general}, where condition ii.2) follows Lemma 1 and Lemma 3 in \cite{cantarella1997general}.
	\end{proof}
	
	In order to simplify notations, we consider that passengers only have one class in the following analysis. However, these results can be easily extended to the multiclass case by following Equation \ref{equ:multiclass}. Besides, a two-class numerical simulation is developed in Section VII.

	\section{Properties of the Equilibrium Flow}
	
	In this section, we {develop} the continuity and monotonicity of the equilibrium flow, with respect to the link incentive $\vec J$ and link cost $C(\vec f)$, i.e., how the equilibrium flow changes as a result of changes in link incentives and link costs. Through this section, we show that little changes in link incentive $\vec J$ will not lead to a significant change in equilibrium flow $\vec f^*$, which offers a good opportunity for achieving higher system profit by carefully designing incentives $\vec J$.
	
	We define the incentive vector $\vec J$ as $[j_1, \ldots, j_L]^\transpose \in \Re ^L$ and the new link cost function becomes $C(\vec f) + \vec J$, where {it is an excess charge on passengers when $j_1 > 0 $ and it is price discount for link 1 when $j_1 < 0$.}
	
	\subsection{Continuity with respect to link incentive \texorpdfstring{$\vec J$}{}} 
	
	Let $\mathbb J$ denote the set of link incentives.
	We define the map $h: \mathbb J  \rightarrow \mathcal{F}$ as follows, which maps the link incentive to an equilibrium flow $\vec f^*$:
	\beq{\label{equ:h}
		h(\vec J)= \vec f^*
	}
	where $\vec f^*$ is the fixed point of mapping 
	{
		\begin{align} \label{equ: hat T}
			\hat{T} (\vec f) := g(C(\vec f)+\vec J)=B  P(V(C(\vec f)+\vec J ))) D(S(V(C(\vec f)+\vec J))), \textit{where \space}  \vec f \in \mathcal F
		\end{align}
	}
	Given functions $(C,P,S,D)$, define another map $\phi:\Re^L  \times \Re^L   \rightarrow \Re^L$ as $\phi( \vec J, \vec f)=g(C (\vec f)+\vec J)-\vec f$.
	We show how the equilibrium flow $\vec f^*$ changes as a result of the change in link incentive $\vec J$ through the following propositions. 
	
	{The first} proposition suggests that small changes in the link incentive $\vec J$ will not lead to a significant change in equilibrium flow $\vec f^*$. {The full proof is stated in \ref{sec: proof_pro_1}. }

	\begin{proposition}\label{prop:cont}
		Given incentive $\vec J_o$ and associated equilibrium flow $\vec f_o$,  if $\phi(\vec J, \cdot)$ is locally one-to-one in a neighborhood of $(\vec J_o, \vec f_o )$ (or $\phi$ is differentiable and $\nabla_{\vec f}\phi(\vec J_o, \vec f_o)$ is  invertible), then $h$ is a continuous map in a neighborhood of $\vec J_o$.
	\end{proposition}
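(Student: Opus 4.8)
The plan is to exhibit, under either stated hypothesis, a \emph{local} continuous solution branch of the equation $\phi(\vec J,\vec f)=0$ through the point $(\vec J_o,\vec f_o)$, and then to identify this branch with $h$ by invoking uniqueness of the equilibrium flow. The starting observation is that $\phi(\vec J,\vec f)=g(C(\vec f)+\vec J)-\vec f$ is \emph{jointly} continuous in $(\vec J,\vec f)$, being built from the continuous maps $C$ and $g=B\,P(V(\cdot))\,D(S(V(\cdot)))$ together with affine operations --- the same continuity used to establish existence of an equilibrium --- and that $\phi(\vec J_o,\vec f_o)=0$ by hypothesis.

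Under the differentiability hypothesis, I would apply the implicit function theorem directly to $\phi$ at $(\vec J_o,\vec f_o)$: since $\nabla_{\vec f}\phi(\vec J_o,\vec f_o)$ is invertible, there is a neighborhood $U$ of $\vec J_o$ and a continuous (indeed $C^1$) map $\psi\colon U\to\Re^L$ with $\psi(\vec J_o)=\vec f_o$ and $\phi(\vec J,\psi(\vec J))=0$ for all $\vec J\in U$, with $\psi(\vec J)$ moreover the \emph{unique} zero of $\phi(\vec J,\cdot)$ in a fixed small ball about $\vec f_o$. Note that this hypothesis in fact implies the first one locally: a $C^1$ map with invertible Jacobian at a point is a local diffeomorphism there, hence locally one-to-one, so the two conditions overlap, and the differentiable case is simply the one in which the branch can be produced without topology.

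Under the local-injectivity hypothesis I would instead use Brouwer's invariance of domain. Let $W\subseteq\Re^{L}\times\Re^{L}$ be an open neighborhood of $(\vec J_o,\vec f_o)$ on which, for each fixed $\vec J$, the slice map $\vec f\mapsto\phi(\vec J,\vec f)$ is one-to-one on $W_{\vec J}:=\{\vec f:(\vec J,\vec f)\in W\}$, and define $\Phi\colon W\to\Re^{2L}$ by $\Phi(\vec J,\vec f)=(\vec J,\phi(\vec J,\vec f))$. Then $\Phi$ is continuous and injective on $W$: if $\Phi(\vec J_1,\vec f_1)=\Phi(\vec J_2,\vec f_2)$ then first $\vec J_1=\vec J_2$ and then $\vec f_1=\vec f_2$ by slice injectivity. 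By invariance of domain, $\Phi(W)$ is open in $\Re^{2L}$ and $\Phi$ is a homeomorphism onto $\Phi(W)$. Since $\Phi(\vec J_o,\vec f_o)=(\vec J_o,\vec 0)\in\Phi(W)$, there is a neighborhood $U$ of $\vec J_o$ with $(\vec J,\vec 0)\in\Phi(W)$ for all $\vec J\in U$, and writing $\Phi^{-1}(\vec J,\vec 0)=(\vec J,\psi(\vec J))$ produces a continuous $\psi\colon U\to\Re^L$ with $\psi(\vec J_o)=\vec f_o$ and $\phi(\vec J,\psi(\vec J))=0$.

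In both cases $\psi(\vec J)$ is a fixed point of $\hat T$ in \eqref{equ: hat T}, i.e.\ an equilibrium flow for incentive $\vec J$, so it remains to show $h(\vec J)=\psi(\vec J)$ near $\vec J_o$. This is where uniqueness enters: if the monotonicity conditions of Lemma~\ref{lemma:unique} hold --- and they are preserved when the constant vector $\vec J$ is added to the monotone cost $C(\vec f)$ --- then the equilibrium is unique, $h$ is single-valued, and necessarily $h=\psi$ on $U$, so continuity of $h$ near $\vec J_o$ follows from that of $\psi$. Without global uniqueness, one instead uses the local uniqueness supplied by each hypothesis together with compactness of $\mathcal F$: given $\vec J_k\to\vec J_o$, every subsequence of $\vec f_k:=h(\vec J_k)$ has a further subsequence converging to some $\bar{\vec f}$ with $\phi(\vec J_o,\bar{\vec f})=0$, and one must rule out $\bar{\vec f}\neq\vec f_o$. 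I expect this identification step to be the main obstacle: invariance of domain (and the implicit function theorem) only deliver a local branch, and showing that the equilibrium selected by $h$ is that branch --- rather than some distant solution --- is exactly what the uniqueness structure of the model is needed for.
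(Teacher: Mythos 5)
Your proposal is correct and takes essentially the same route as the paper: the paper's proof just invokes the continuous (Kumagai) implicit function theorem for the locally one-to-one case --- whose proof is precisely the invariance-of-domain argument you spell out --- together with the classical implicit function theorem in the differentiable case. Your final step identifying the local solution branch with the equilibrium map $h$ via uniqueness of the equilibrium flow is extra care the paper leaves implicit (it simply names the implicit function $h$), and it is the right way to close that small gap.
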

	
	The following proposition further provides the Lipschitz continuity property of mapping $h$, which limits how fast the equilibrium flow $\vec{f}^*$ can change with respect to the link incentive $\vec J$ and will be used in the proof of our main algorithm {(Theorem \ref{thm:twotimescale}). We defer the detailed proof to \ref{sec: proof_pro_2}. }

	\begin{proposition}\label{prop:lip}
		If $g$ is a Lipschitz continuous map with Lipschitz coefficient $\beta$ and $\hat{T}$ is a contraction with parameter $\alpha$, then $h$ is Lipschitz continuous with Lipschitz coefficient $\frac{\beta}{1-\alpha}$.
	\end{proposition}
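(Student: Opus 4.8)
The plan is to invoke the standard argument that the fixed point of a uniformly contractive, parameter-dependent map is Lipschitz in the parameter, where here the parameter is the link incentive $\vec J$ and the map is $\hat T(\cdot) = g(C(\cdot)+\vec J)$. First I would record that $h$ is well defined: for each $\vec J \in \mathbb J$ the map $\hat T$ is a contraction with modulus $\alpha < 1$ on $\mathcal F$, which is a closed subset of $\Re^L$ and hence complete, so the Banach fixed-point theorem yields a unique fixed point $h(\vec J) = \vec f^*$, and in particular $h(\vec J) = g(C(h(\vec J)) + \vec J)$ for every $\vec J$.

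Next, fix $\vec J_1, \vec J_2 \in \mathbb J$ and set $\vec f_i := h(\vec J_i)$, so that $\vec f_i = g(C(\vec f_i) + \vec J_i)$ for $i = 1, 2$. The key step is to insert the intermediate point $g(C(\vec f_1) + \vec J_2)$ and split with the triangle inequality:
\[
\norm{\vec f_1 - \vec f_2} \le \norm{g(C(\vec f_1) + \vec J_1) - g(C(\vec f_1) + \vec J_2)} + \norm{g(C(\vec f_1) + \vec J_2) - g(C(\vec f_2) + \vec J_2)}.
\]
In the first term the $C(\vec f_1)$ contributions cancel inside $g$, so Lipschitz continuity of $g$ with coefficient $\beta$ bounds it by $\beta \norm{\vec J_1 - \vec J_2}$. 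In the second term, $g(C(\vec f_1) + \vec J_2)$ and $g(C(\vec f_2) + \vec J_2)$ are precisely $\hat T(\vec f_1)$ and $\hat T(\vec f_2) = \vec f_2$ for the map $\hat T$ associated with incentive $\vec J_2$; since that map is a contraction with parameter $\alpha$, this term is at most $\alpha \norm{\vec f_1 - \vec f_2}$. Hence $\norm{\vec f_1 - \vec f_2} \le \beta \norm{\vec J_1 - \vec J_2} + \alpha \norm{\vec f_1 - \vec f_2}$, and since $\alpha < 1$ I can rearrange to obtain $\norm{h(\vec J_1) - h(\vec J_2)} \le \frac{\beta}{1-\alpha}\norm{\vec J_1 - \vec J_2}$, which is the claim.

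There is no genuine obstacle here; the argument is short. The only points requiring a little care are (i) choosing the intermediate term so that the $C(\vec f_1)$ terms cancel exactly when $g$'s Lipschitz estimate is applied, so that the first term depends on $\norm{\vec J_1 - \vec J_2}$ alone and not on any Lipschitz constant of $C$; and (ii) reading the hypothesis ``$\hat T$ is a contraction with parameter $\alpha$'' as a uniform-in-$\vec J$ statement, so that it may be applied to the map with the fixed incentive $\vec J_2$ in order to absorb the second term into the left-hand side.
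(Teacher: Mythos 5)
Your argument is correct and is essentially the paper's own proof: both insert an intermediate term of the form $\hat T_{\vec J_2}(\vec f_1)$ (the paper uses the mirror choice $\hat T_{\vec J_1}(\vec f_2^*)$, which is equivalent by symmetry), bound one piece by the contraction modulus $\alpha$ and the other by the Lipschitz constant $\beta$ of $g$, and rearrange using $\alpha<1$. Your added remark on well-definedness of $h$ via the Banach fixed-point theorem is a harmless refinement the paper leaves implicit.
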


	\subsection{Monotonicity with respect to link cost}
	This subsection shows that, by reducing link cost on one specific link, the equilibrium flow on that link will increase, which may lead to higher revenue and higher profit. 
	
	For given functions  $(P,S,D)$ and $\vec J=0$, we want to discuss the effects of cost functions $C$ on the equilibrium flow $\vec f^*$. In the following proposition, we find that the link equilibrium flow $f^*_l$ is monotone decreasing with its link cost $c_l(\vec f)$.

	\begin{proposition}\label{prop:mono}
		If assumptions in Lemma \ref{lemma:unique} hold and there are two link cost functions $C_1(\vec f), C_2(\vec f)$, which only differ from each other on one link $l$, i.e.,  $c_{1,l}(\vec f) > c_{2,l}(\vec f)$, $c_{1,i}(\vec f) =c_{2,i}(\vec f), i \neq m, i \in \mathcal L, \forall  \vec f$,  Besides, $\vec f^*_1, \vec f^*_2$ are the associated equilibrium flow and $f^*_{1,l}, f^*_{2,l}$ are the equilibrium flow on link $l$, then $f^*_{1,l} < f^*_{2,l}$.
	\end{proposition}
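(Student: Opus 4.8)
The plan is a monotone-operator comparative-statics argument. I would study the scalar $\big(\vec c^*_1-\vec c^*_2\big)^{\transpose}\big(\vec f^*_1-\vec f^*_2\big)$, with $\vec c^*_i:=C_i(\vec f^*_i)$, and sandwich it between a lower bound that exposes the sign of $f^*_{1,l}-f^*_{2,l}$ and the value $0$ from above. For the setup I introduce $\vec v^*_i:=V(\vec c^*_i)=-\beta B^{\transpose}\vec c^*_i-\vec c_s$, $s_i:=S(\vec v^*_i)$, $d_i:=D(s_i)$, and $\vec x_i:=d_i\,P(\vec v^*_i)$; unpacking the equilibrium condition $\vec f^*_i=g(C_i(\vec f^*_i))$ (cf. \eqref{equ:t}) then gives $\vec f^*_i=B\vec x_i$ for $i=1,2$. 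I use $\beta>0$ throughout, the only regime in which the statement can hold.

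For the lower bound, decompose $C_1(\vec f^*_1)-C_2(\vec f^*_2)=\big(C_1(\vec f^*_1)-C_1(\vec f^*_2)\big)+\big(C_1(\vec f^*_2)-C_2(\vec f^*_2)\big)$. The first bracket paired with $\vec f^*_1-\vec f^*_2$ is nonnegative by monotonicity of $C_1$ (condition 1 of Lemma \ref{lemma:unique}); the second bracket is supported on link $l$ only and equals $\big(c_{1,l}(\vec f^*_2)-c_{2,l}(\vec f^*_2)\big)\vec{e}_l$, so pairing it with $\vec f^*_1-\vec f^*_2$ gives exactly $\big(c_{1,l}(\vec f^*_2)-c_{2,l}(\vec f^*_2)\big)\big(f^*_{1,l}-f^*_{2,l}\big)$. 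Hence $\big(\vec c^*_1-\vec c^*_2\big)^{\transpose}\big(\vec f^*_1-\vec f^*_2\big)\ge \big(c_{1,l}(\vec f^*_2)-c_{2,l}(\vec f^*_2)\big)\big(f^*_{1,l}-f^*_{2,l}\big)$.

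For the upper bound, use $\vec f^*_i=B\vec x_i$ and the affine structure of $V$ to pass to route-utility space: $\big(\vec c^*_1-\vec c^*_2\big)^{\transpose}\big(\vec f^*_1-\vec f^*_2\big)=\big(B^{\transpose}(\vec c^*_1-\vec c^*_2)\big)^{\transpose}(\vec x_1-\vec x_2)=-\frac{1}{\beta}\big(\vec v^*_1-\vec v^*_2\big)^{\transpose}(\vec x_1-\vec x_2)$. It then suffices to show the route-side map $\vec v\mapsto D(S(\vec v))P(\vec v)$ is monotone non-decreasing: since $d_i\ge 0$, regularity of $P$ (condition 4) gives $P(\vec v^*_1)^{\transpose}(\vec v^*_1-\vec v^*_2)\ge s_1-s_2\ge P(\vec v^*_2)^{\transpose}(\vec v^*_1-\vec v^*_2)$, hence $\big(\vec v^*_1-\vec v^*_2\big)^{\transpose}(\vec x_1-\vec x_2)\ge d_1(s_1-s_2)-d_2(s_1-s_2)=(d_1-d_2)(s_1-s_2)\ge 0$ by monotonicity of $D\circ S$ (condition 5). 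Therefore $\big(\vec c^*_1-\vec c^*_2\big)^{\transpose}\big(\vec f^*_1-\vec f^*_2\big)\le 0$.

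Combining the two bounds yields $\big(c_{1,l}(\vec f^*_2)-c_{2,l}(\vec f^*_2)\big)\big(f^*_{1,l}-f^*_{2,l}\big)\le 0$, and since $c_{1,l}(\vec f^*_2)-c_{2,l}(\vec f^*_2)>0$ by hypothesis, $f^*_{1,l}\le f^*_{2,l}$. Promoting this to the strict inequality $f^*_{1,l}<f^*_{2,l}$ is the delicate step and, I expect, the main obstacle. I would argue by contradiction: if $f^*_{1,l}=f^*_{2,l}$ then all the inequalities above collapse to equalities, in particular $\big(\vec v^*_1-\vec v^*_2\big)^{\transpose}(\vec x_1-\vec x_2)=0$ and $(d_1-d_2)(s_1-s_2)=0$; for the logit choice model, equality in the regularity bounds forces $\vec v^*_1-\vec v^*_2$ to be constant across routes, which makes the choice probabilities and the realized demand coincide and hence $\vec f^*_1=\vec f^*_2=:\vec f^*$. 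Substituting back into $\vec v^*_i=V(C_i(\vec f^*))$ then forces $\big(c_{1,l}(\vec f^*)-c_{2,l}(\vec f^*)\big)\,\beta\,B^{\transpose}\vec{e}_l$ to be a constant vector, which is impossible when $\beta>0$, $c_{1,l}(\vec f^*)>c_{2,l}(\vec f^*)$, and link $l$ is used by the network nondegenerately (equivalently, one invokes strict monotonicity of $C$ or of $D\circ S$). Carefully isolating the harmless nondegeneracy assumption needed to close this last step is where the real work lies; everything else is routine rewriting of the equilibrium identities.
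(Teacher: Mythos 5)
Your argument tracks the paper's own proof quite closely in its core mechanics: the paper also studies the pairing $\bigl(g(C_1(\vec f^*_1))-g(C_2(\vec f^*_2))\bigr)^{\transpose}\bigl(C_1(\vec f^*_1)-C_2(\vec f^*_2)\bigr)$, which by the fixed-point identities is exactly your scalar $(\vec c^*_1-\vec c^*_2)^{\transpose}(\vec f^*_1-\vec f^*_2)$, and it uses the same add-and-subtract of $C_1(\vec f^*_2)$ so that monotonicity of $C_1$ (condition 1) plus the single-link cost gap isolates the term $(c_{1,l}(\vec f^*_2)-c_{2,l}(\vec f^*_2))(f^*_{1,l}-f^*_{2,l})$. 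Your route-space computation via conditions 2, 4 and 5 is a correct, self-contained re-derivation of the fact that the assignment map is monotone non-increasing in link costs, and it cleanly yields the weak conclusion $f^*_{1,l}\le f^*_{2,l}$.

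The genuine gap is the strict inequality, and you have correctly diagnosed where it lies but not closed it. The paper does not derive strictness from conditions 1, 4, 5 at all: it invokes the \emph{quasi-strict} monotonicity of the assignment map (Lemma 3 of \cite{cantarella1997general}), which rests on condition 3 of Lemma \ref{lemma:unique} — the strictly positive, additive choice map — a hypothesis your main argument never uses. With that cited property the paper concludes $\bigl(g(\vec c^*_1)-g(\vec c^*_2)\bigr)^{\transpose}(\vec c^*_1-\vec c^*_2)<0$ and obtains the contradiction in one line, whereas your regularity bound (condition 4) can only ever give $\le 0$. Your proposed patch via the logit structure is the right direction (equality forces the utility difference to be constant across routes, hence $\vec f^*_1=\vec f^*_2$), but as you yourself note it then stalls on the degenerate configuration in which $B^{\transpose}\vec e_l$ is constant over all routes and $D\circ S$ is locally flat; you end up needing a nondegeneracy assumption that is not among the proposition's hypotheses, and you do not isolate or verify it. So as written the proposal proves $f^*_{1,l}\le f^*_{2,l}$ but not the claimed strict inequality; the missing ingredient is precisely the quasi-strict monotonicity property that the paper imports from condition 3 rather than leaving as an extra assumption.
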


	This proposition shows that by reducing link cost on one specific link, the equilibrium flow on that link will increase, which may lead to higher revenue and higher profit. Inspired by this idea, we introduce a link incentive $\vec J$ in the next section and want to achieve higher system profit by carefully designing $\vec J$.  {The formal proof is deferred to \ref{sec: proof_pro_3}. }

	\section{Optimal Pricing with {Passenger-friendly} Constraints}\label{sec:pricing}
	In this section, we discuss the design of link incentives $\vec J$ in order to maximize the platform's profit while maintaining passenger-friendly constraints.

	For the link cost function $C(\vec f)$, after applying the incentive on each link, we have the new link cost as $C(\vec f)+\vec J=C_p(\vec f)+C_t(\vec f)+\vec J$, where $C_p(\vec f)+\vec J$ is the price charged for passengers (the revenue of the platform {per passenger}) and $C_t(\vec f)$ is the equivalent time cost. {If $\vec J >0$, it is excess charge on passengers, while it is price discount when $\vec J <0$. i.e., passengers will give $C_p(\vec f) + \vec J$ amount of money to platform.} Assume that the link-wise supply cost for the service provider is $W(\vec f)$ where $W: \Re^L \rightarrow \Re^L$. Then the link-wise profit function without incentive is $\pi(\vec f)=C_p(\vec f)-W(\vec f)$, where $\pi: \Re^L \rightarrow \Re^L$ is the profit function. After applying incentive $\vec J$, the link-wise profit is $\pi(\vec f)+\vec J$.
	
	We want to maximize the profit of the platform under some passenger-friendly constraints, by controlling link incentive $\vec J$. The optimization problem is as follows
	\beq{\label{equ: max_profit} 
		\max_{\vec J\in\Re^L}  h^T (\vec J) \big( \pi(h(\vec J)) + \vec J \big ) }
	\beq {\label{equ: constraint} s.t. \text{\space}  B^T \vec J \leq 0, \vec J_{min} \leq \vec J \leq \vec J_{max}} 
	where the objective function is the total profit of the platform and $h(\vec J)=\vec f^*$ is defined in \eqref{equ:h}. The first  constraint {(passenger-friendly)} requires that the route incentive $B^T \vec J$ is non-positive, i.e., under the same flow, the total route cost $B^T (C(\vec f)+ \vec J)$ is less than or equal to its original route cost without any incentive ($\vec J=0$). Although the price of each route drops, at the same time, some link costs may increase. The second constraint indicates that incentive $\vec J$ is lower bounded by a constant vector $\vec J_{min}$, which prevents the link price $C_p(\vec f)+\vec J$ from becoming negative. The incentive $\vec J$ is also upper bounded by $\vec J_{max}$.

	As $h(\vec J_o)$ is the equilibrium flow under incentive $\vec J_o$, it can be calculated through iteration from the Method of Successive Averages 
	\beq{\label{equ:algo_f_jo}
		\vec f_{k+1} = (1-\alpha_k)\vec f_k +\alpha_k g(C(\vec f_k)+\vec J_o) 
	}
	where $\alpha_k$ is the decreasing step size.
	
	Traditionally, if we want to find the optimal incentive $\vec J^*$, we need to run one iteration \ref{equ:algo_f_jo} for each candidate $\vec J$, which is not feasible when the number of candidate $\vec J$ is very large or even infinite.  Inspired by the above iteration, we want to develop an algorithm, which achieves both optimal incentive $\vec J^*$ and the equilibrium flow $\vec f^*$ at the same time. 
	
	Due to the complexity of the problem, we do not know the explicit form of function $h(\vec J)$. We use the gradient of function $g$ to approximate the gradient of $h$, $\nabla h(\vec J_o) \approx \nabla g(C(\vec f_o)+\vec J_o)$. Therefore, $h(\vec J) \approx h(\vec J_o) + \nabla h(\vec J_o) (\vec J-\vec J_o)\approx \vec f_o + \nabla g(C(\vec f_o)+\vec J_o) (\vec J- \vec J_o)$.
	Here is the two time scale stochastic approximation algorithm:
	\beq{
		\vec f_{k+1} = (1-\alpha_k)\vec f_k +\alpha_k g(C(\vec f_k)+\vec J_k)\label{equ:algo_f}}
	\beq{\vec J_{k+1} = (1-\beta_k) \vec J_k + \beta_k \psi(\vec f_k, \vec J_k) \label{equ:algo_j}}
	where 
	\beq{ \begin{split} \label{equ:psi}
			\psi(\vec f_k, \vec J_k) =  \arg\max_{\vec x\in\Re^L} &\Big(\vec f_k+ \nabla g(C(\vec f_k)+\vec J_k)(\vec x- \vec J_k) \Big)^T  \\
			&\Big(\pi \big(\vec f_k+\nabla g(C(\vec f_k)+\vec J_k)(\vec x-\vec J_k) \big)+ \vec x \Big) \\
			\text{ subject to } &B^T \vec x \leq 0, \vec J_{min} \leq \vec x \leq \vec J_{max}
		\end{split}
	}
	
	Equation \ref{equ:algo_f} follows the original convergence of equilibrium flow in Equation \ref{equ:algo_f_jo} by replacing constant incentive $\vec J_o$ with incentive $\vec J_k$. Equation \ref{equ:algo_j} optimizes $\vec J$ along the direction of maximum profit where function $\psi$ solves the optimization problem based on current flow $\vec f_k$ and incentive $\vec J_k$.

	Define the flow error as $\delta \vec f= ||g(C(\vec f_k)+\vec J_k)- \vec f_k||$ and incentive error as $\delta \vec J = ||\psi(\vec f_k, \vec J_k)- \vec J_k||$. We explicitly state the algorithm in Algorithm \ref{alg:two}. From line 1 to line 5, we compute the equilibrium flow $\vec f'$ without any incentive ($\vec J_o=0$). From line 6 to line 11, the optimal incentive $\vec J^*$ and associated equilibrium flow $\vec f^*$ are calculated through the two time scale stochastic approximation algorithm.
	
	\begin{algorithm}
		\SetAlgoLined
		\SetKwInOut{Input}{Input}\SetKwInOut{Output}{Output}
		\Input{$g(\vec c)$, $\psi(\vec f, \vec J)$, $C(\vec f)$, $\alpha_k$, $\beta_k$ and $\epsilon_i$.}
		\Output{$\vec f^*, \vec J^*$}
		\BlankLine
		$\vec J_o=0$, $\vec f_1=0$, $k=1$\;
		\While{$\delta \vec f \geq \epsilon_1$}{
			$\vec f_{k+1} = (1-\alpha_k)\vec f_k +\alpha_k g(C(\vec f_k)+\vec J_o)$\;
			$k=k+1$\;
		}
		$\vec f_1=\vec f_{k}$, $\vec J_1=0$, $k=1$\;
		\While{$\delta \vec f \geq \epsilon_1$ \textbf{\textup{or}} $\delta \vec J \geq \epsilon_2$}{
			$\vec f_{k+1} = (1-\alpha_k)\vec f_k +\alpha_k g(C(\vec f_k)+\vec J_k)$\;
			$\vec J_{k+1} = (1-\beta_k) \vec J_k + \beta_k \psi(\vec f_k, \vec J_k)$\;
			$k=k+1$\;
		}
		\Return{$\vec f_k$, $\vec J_k$}
		\caption{Two time scale stochastic approximation algorithm for optimal incentive $\vec J^*$ and associated equilibrium flow $\vec f^*$}{\label{alg:two}}
	\end{algorithm}

	In order to prove the convergence of the two time scale algorithm, we need the following lemma.
	\begin{lemma}
		\label{lem:f}
		For a fixed incentive $\vec J_o$, if $\sum_{k=1}^{\infty} \alpha_k=\infty$ and $\sum_{k=1}^{\infty} \alpha_k^2 < \infty$, then $\vec f_{k+1}$ in Equation \ref{equ:algo_f_jo} converges to $\vec f^*$ and $\vec f^*=h(\vec J_o)$.
	\end{lemma}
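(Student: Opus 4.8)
The plan is to read \eqref{equ:algo_f_jo} as a (noiseless) stochastic-approximation recursion pushing the flow toward a zero of $\phi(\vec J_o,\cdot)$ and then invoke standard convergence arguments. Rewriting the update,
\[
\vec f_{k+1} = \vec f_k + \alpha_k\big(\hat T(\vec f_k)-\vec f_k\big) = \vec f_k + \alpha_k\,\phi(\vec J_o,\vec f_k),
\]
with $\hat T$ as in \eqref{equ: hat T}. Two preliminary observations are immediate. First, since $\mathcal F$ is convex and both $\vec f_k\in\mathcal F$ and $\hat T(\vec f_k)=g(C(\vec f_k)+\vec J_o)\in\mathcal F$, every iterate is a convex combination of points of $\mathcal F$, hence stays in the compact set $\mathcal F$, so the sequence $\{\vec f_k\}$ is bounded. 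Second, by Lemma 1 a fixed point $\vec f^*$ of $\hat T$ exists, by Lemma \ref{lemma:unique} it is unique, and by definition $\vec f^*=h(\vec J_o)$ (see \eqref{equ:h}); thus it only remains to prove $\vec f_k\to\vec f^*$.

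First I would treat the case that is cleanest and consistent with the hypotheses already used in the paper (Proposition \ref{prop:lip}), namely that $\hat T$ is a contraction on $\mathcal F$ with modulus $\rho\in[0,1)$. Put $e_k:=\norm{\vec f_k-\vec f^*}$. Subtracting $\vec f^* = (1-\alpha_k)\vec f^* + \alpha_k\hat T(\vec f^*)$ from the update gives
\[
\vec f_{k+1}-\vec f^* = (1-\alpha_k)(\vec f_k-\vec f^*) + \alpha_k\big(\hat T(\vec f_k)-\hat T(\vec f^*)\big),
\]
so, for $k$ large enough that $\alpha_k\le 1$, $e_{k+1}\le (1-\alpha_k)e_k + \alpha_k\rho\,e_k = \big(1-(1-\rho)\alpha_k\big)e_k$. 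Iterating and using $1-x\le e^{-x}$ yields $e_k\le e_1\exp\!\big(-(1-\rho)\sum_{j}\alpha_j\big)$, which tends to $0$ because $\sum_k\alpha_k=\infty$; hence $\vec f_k\to\vec f^*$. In this route $\sum_k\alpha_k^2<\infty$ is not actually needed; it is retained in the statement because it is the condition required by the general noisy two time-scale analysis of Theorem \ref{thm:twotimescale}.

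For completeness I would also indicate the argument that does not presuppose a global contraction, via the ODE method. The limiting ODE is $\dot{\vec f}(t)=\phi(\vec J_o,\vec f(t))=\hat T(\vec f(t))-\vec f(t)$, whose rest points are exactly the fixed points of $\hat T$, i.e. the single point $\vec f^*$. Using the monotonicity of $C$, the linearity/monotonicity of $V$, and the positivity, regularity of $P$ and monotonicity of $D$ from Lemma \ref{lemma:unique}, one argues that $\vec f^*$ is globally asymptotically stable for this ODE; the natural Lyapunov candidate is $\mathcal V(\vec f)=\tfrac12\norm{\vec f-\vec f^*}^2$, for which $\dot{\mathcal V} = (\vec f-\vec f^*)^\transpose\big(\hat T(\vec f)-\hat T(\vec f^*)\big) - \norm{\vec f-\vec f^*}^2$, the quantity the contraction/monotonicity structure forces to be strictly negative off $\vec f^*$. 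Boundedness of the iterates (shown above), $\sum_k\alpha_k=\infty$, $\sum_k\alpha_k^2<\infty$, and the absence of martingale noise then let a standard stochastic-approximation convergence theorem conclude $\vec f_k\to\vec f^*$. The one genuinely delicate point in this second route is establishing the strict Lyapunov decrease from the asymmetric-Jacobian monotonicity assumptions alone; if that cannot be carried out in full generality, the contraction hypothesis of Proposition \ref{prop:lip} is the clean fallback, and it is precisely the setting in which the downstream results are stated.
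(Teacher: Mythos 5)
Your argument is correct on its own terms, but it takes a genuinely different route from the paper, and the difference matters for which hypotheses are actually used. The paper's proof is essentially a citation: convergence of the MSA iteration \eqref{equ:algo_f_jo} is obtained from Theorem 3 of \cite{cantarella1997general} (with $\nabla\varphi(\vec f)$ replaced by $C(\vec f)-C(\vec f^*)+\vec J_o$), i.e.\ from a descent-function argument that works under exactly the monotonicity/regularity conditions of Lemma \ref{lemma:unique}, with no contraction assumption on $\hat T$; the identification $\vec f^*=h(\vec J_o)$ then follows from uniqueness, just as in your final step. Your first route instead assumes $\hat T$ is a contraction and runs the elementary deterministic estimate $e_{k+1}\le\bigl(1-(1-\rho)\alpha_k\bigr)e_k$; this is clean and correct, and it even shows $\sum_k\alpha_k^2<\infty$ is not needed in that regime, but contraction is an extra hypothesis that is not part of the lemma's statement (it appears only in Proposition \ref{prop:lip}), so this proof covers a strictly narrower setting than the paper's. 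Your second (ODE/Lyapunov) route is the one that would match the paper's generality, and you correctly flag the delicate point --- strict Lyapunov decrease from the monotonicity conditions alone --- but you do not carry it out; that is precisely the content the paper imports from \cite{cantarella1997general}, whose Lyapunov construction is built from the cost difference rather than from $\tfrac12\norm{\vec f-\vec f^*}^2$ (with an asymmetric Jacobian the quadratic candidate need not decrease, which is why that construction is used). Your boundedness observation (iterates remain in the convex compact $\mathcal F$) and the uniqueness step agree with the paper. In short: correct under contraction, honestly incomplete in the general case; to prove the lemma as stated you should either invoke \cite[Theorem 3]{cantarella1997general} as the paper does, or supply the monotonicity-based Lyapunov argument you only sketch.
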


	The above lemma concludes the convergence of equilibrium flow under a fixed incentive $\vec{J}_o$.  {The detailed proof is deferred to \ref{sec: proof_lem_4}. } In the following, we introduce an assumption about function $\psi$, which is used to prove the convergence of our algorithm.
	
	\begin{assumption}\label{asm:psi}
		We assume that
		\beqq{\Big(h(\vec J^*)-h(\vec J)\Big)^T \Big(\psi(h(\vec J),\vec J)-\vec J\Big)<0, \forall \vec J \neq \vec J^*.
		}
	\end{assumption}
	
	
	This assumption is similar to the monotonicity property assumed in Proposition \ref{prop:mono}: if the link cost is reduced on one specific link, the equilibrium flow on that link will increase, which may lead to higher profit. For example, if $h(\vec J^*)=h(\vec J)$ except on one link $l:f^*_l >f_l$, then $\Big(h(\vec J^*)-h(\vec J)\Big)^T \Big(\psi(h(\vec J),\vec J)-\vec J\Big)=(f^*_l-f_l)(j_l'-j_l)<0$, where $j_l'$ is the new incentive on link $j$. Compared with the current incentive $j_l$, the new incentive follows $j_l'<j_l$ after the optimization in function $\psi$, i.e. the link cost is reduced from $c_l(\vec f)+j_l$ to  $c_l(\vec f)+j_l'$, which leads to an increase of the equilibrium flow  on link $l$ and makes $f_l$ closer to optimal flow $f^*_l$. We believe that this assumption is not too stringent for the setting under consideration.

	
	{The following theorem concludes the convergence of Algorithm \ref{alg:two}, i.e., the output of this algorithm $(\vec f_{k+1},\vec J_{k+1})$  solves the optimization problem in Equation \ref{equ: max_profit} and \ref{equ: constraint}, which maximizes the platform’s profit while maintaining {passenger-friendly} constraints. We defer the detailed proof to \ref{sec: proof_thm_1} }
	
	\begin{theorem}\label{thm:twotimescale}
		If Assumption \ref{asm:psi} holds, $\alpha_k,\beta_k$ is tapering step size and $\beta_k/\alpha_k\rightarrow 0$, then $(\vec f_{k+1},\vec J_{k+1})$ in Algorithm \ref{alg:two} converges to $(\vec f^*,\vec J^*)$.
	\end{theorem}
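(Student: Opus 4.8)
The plan is to treat the coupled recursions \eqref{equ:algo_f}--\eqref{equ:algo_j} as a two time-scale stochastic approximation scheme: since $\beta_k/\alpha_k\to 0$, the incentive iterate $\vec J_k$ evolves on a slower clock than the flow iterate $\vec f_k$, so that the $\vec f$-recursion sees $\vec J_k$ as quasi-static while the $\vec J$-recursion sees $\vec f_k$ as already equilibrated. The warm-up loop (lines 1--5 of Algorithm \ref{alg:two}) merely initializes $\vec f$ at the zero-incentive equilibrium $h(\vec 0)$ by Lemma \ref{lem:f} and has no effect on the limit, so it suffices to analyze the second loop. I would invoke the standard two time-scale convergence machinery, which reduces the problem to (i) verifying regularity and boundedness, (ii) identifying the limiting o.d.e.\ of the fast recursion with $\vec J$ frozen and showing it has a unique globally asymptotically stable equilibrium, and (iii) identifying the limiting o.d.e.\ of the slow recursion (with $\vec f_k$ replaced by its fast-scale limit) and showing it, too, has a unique globally asymptotically stable equilibrium.

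For (i): the iterates stay bounded because each update is a convex combination of the current point with a feasible point --- $\vec f_k$ remains in the compact set $\mathcal F$ and $\vec J_k$ in the box $[\vec J_{min},\vec J_{max}]$, the latter using that $\psi(\vec f_k,\vec J_k)$ satisfies the box constraint by construction in \eqref{equ:psi} --- and the maps $\vec f\mapsto g(C(\vec f)+\vec J)$ and $(\vec f,\vec J)\mapsto\psi(\vec f,\vec J)$ are Lipschitz (the former by composing the Lipschitz maps $C,V,P,S,D$ as already used for Proposition \ref{prop:lip} and Lemma \ref{lem:f}; the latter through the smooth parametric dependence of the linearized program). The step sizes are tapering with $\sum_k\alpha_k=\sum_k\beta_k=\infty$ and $\sum_k\alpha_k^2,\sum_k\beta_k^2<\infty$, and the algorithm as written carries no martingale noise, so those conditions are vacuous. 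For (ii): with $\vec J$ frozen, $\vec f_k$ shadows $\dot{\vec f}=g(C(\vec f)+\vec J)-\vec f=\hat T(\vec f)-\vec f$; using that $\hat T$ is a contraction with parameter $\alpha$ independent of $\vec J$, the function $\tfrac12\norm{\vec f-h(\vec J)}^2$ decreases at rate $-(1-\alpha)$, so $h(\vec J)$ is its unique globally (exponentially) asymptotically stable equilibrium, uniformly in $\vec J$. The two time-scale tracking lemma then gives $\norm{\vec f_k-h(\vec J_k)}\to 0$.

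For (iii): substituting $\vec f_k=h(\vec J_k)+o(1)$ into \eqref{equ:algo_j} and using continuity of $h$ (Proposition \ref{prop:cont}) and of $\psi$, the incentive iterate is an ordinary stochastic approximation for $\dot{\vec J}=\psi(h(\vec J),\vec J)-\vec J$. Assumption \ref{asm:psi}, namely $\big(h(\vec J^*)-h(\vec J)\big)^\transpose\big(\psi(h(\vec J),\vec J)-\vec J\big)<0$ for all $\vec J\neq\vec J^*$, encodes precisely that $\vec J^*$ is the only equilibrium of this o.d.e.\ (at any other point the drift is nonzero) and, via the monotone correspondence between $\vec J$ and $h(\vec J)$ from Proposition \ref{prop:mono}, supplies strict decrease of a Lyapunov function built from $h(\vec J)-h(\vec J^*)$ along every nonconstant trajectory; LaSalle's invariance principle then yields global asymptotic stability of $\vec J^*$. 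Combining the two scales, $\vec J_k\to\vec J^*$ and $\vec f_k\to h(\vec J^*)=\vec f^*$; since $\vec J^*$ is the maximizer of \eqref{equ: max_profit}--\eqref{equ: constraint}, Algorithm \ref{alg:two} indeed outputs the profit-maximizing incentive together with its equilibrium flow.

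The step I expect to fight with is (iii), and specifically turning the one-sided inequality of Assumption \ref{asm:psi} --- which lives in flow space through $h(\vec J)-h(\vec J^*)$ --- into a bona fide Lyapunov function for a dynamics that lives in incentive space; this is where the continuity and monotonicity of $h$ from Propositions \ref{prop:cont}--\ref{prop:mono} must be used carefully, e.g.\ to relate $\nabla h(\vec J)$ acting on the drift $\psi(h(\vec J),\vec J)-\vec J$ back to the quantity the assumption controls, and to guarantee that $h$ separates distinct incentives so that $h(\vec J_k)\to h(\vec J^*)$ actually forces $\vec J_k\to\vec J^*$. A secondary but necessary check is the uniform-in-$\vec J$ asymptotic stability of the fast o.d.e., which the $\vec J$-independent contraction constant of $\hat T$ delivers.
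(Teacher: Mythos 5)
Your plan follows essentially the same route as the paper: invoke the Borkar two time-scale result, establish the fast o.d.e. $\dot{\vec f}=g(C(\vec f)+\vec J)-\vec f$ has the unique globally stable equilibrium $h(\vec J)$ via Lemma \ref{lem:f} and Proposition \ref{prop:lip}, and establish stability of the slow o.d.e. $\dot{\vec J}=\psi(h(\vec J),\vec J)-\vec J$ from Assumption \ref{asm:psi} together with LaSalle's invariance principle. The one step you flag as problematic --- turning the flow-space inequality of Assumption \ref{asm:psi} into a Lyapunov function in incentive space --- is exactly where the paper is also thin: it simply posits a function $\varphi$ with $\varphi(\vec J)>0$ and $\nabla\varphi(\vec J)=h(\vec J^*)-h(\vec J)$ (implicitly assuming this vector field is conservative) and then applies LaSalle, so your proposal neither adds to nor falls short of the paper's own treatment of that point.
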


	{
		\begin{remark}
			If $\vec J_{min} \leq 0 \leq \vec J_{max}$, the profit in Equation \ref{equ: max_profit}  will not decrease.
		\end{remark}
	}

	\section{Fair Profit Sharing Scheme for Providers}
	In this section, we discuss the profit sharing scheme among providers. We first introduce some basic schemes in the fair division and two-person bargaining problem. Besides, an asymmetric n-person Nash Bargaining solution is introduced and its closed-form solution is provided, which ensures each provider will have a profit increase after cooperation.
	
	\subsection{Fair division}
	As money is a divisible homogeneous resource, there are multiple principles for deciding how to divide \cite{moulin2004fair}. Let the total profit after cooperation be $R_c$ and there are $N$ providers in the system. Each provider $i$ has a utility function $U_i(R_i)$, where $\sum_{i} R_i=R_c$. Let $\vec R= [R_1, \ldots, R_N]^T$. 
	
	\textbf{Envy-free}: In this case, $R_i=\frac{R_c}{N}, \forall i$. If each provider gets the same amount of profit, then no one envies another provider, regardless of the utility function.
	
	\textbf{Egalitarian}: $\vec R=\arg\max_{\vec R}\min_{i} U_i(R_i)$, which maximizes the minimum utility among providers. If the intersection of utility function ranges is not empty, then $U_i(R_i)=U_j(R_j), \forall i,j$, i.e., the utility of every provider is equal.
	
	\textbf{Utilitarian}: $\vec R=\arg\max_{\vec R} \sum_i U_i(R_i)$, which maximizes the sum of utilities.
	
	If different providers have various utility functions towards profit, these schemes work. Their disadvantage is that they do not consider the profit before cooperation.

	\subsection{Bargaining problem}
	The two-person bargaining problem studies how two agents share a surplus that they can jointly generate. Let $F \subset \Re^2$ be the feasibility set, which contains all possible payoffs and is convex. The disagreement point $d=(d_1,d_2)$, represents the payoffs to each player when no agreement is established.

	\textbf{Proportional Solution}: $R_1= \theta R'_c$ and $R_2= (1-\theta) R'_c$ where $R'_c$ is the maximum of $R_1+R_2$ which satisfies $(R_1,R_2) \in F$ \cite{kalai1977proportional}.

	\textbf{The Nash Solution}: $\vec R =\arg\max (R_1 -d_1)(R_2-d_2) $ subject to $(R_1,R_2)\in F$. John Nash proves the solution satisfies  four axioms and maximizes the product of surplus. \cite{nash1950bargaining}.
	
	\textbf{Kalai-Smorodinsky Solution}: Let $\textit{Best}_1(F)$ and $\textit{Best}_2(F)$ be the best payoff that each player can achieve in $F$. The Kalai–Smorodinsky solution $(R_1,R_2)$ follows
	\beqq{
		\frac{R_1-d_1}{R_2-d_2}=\frac{\textit{Best}_1(F)-d_1}{\textit{Best}_2(F)-d_2}
	}
	which is the maximal point in $F$ that maintains the ratio of gains \cite{kalai1975other}.
	
	Leveraging the disagreement point in the Bargaining problem, we can make sure that each provider has a profit increase after cooperation. As there are usually more than two providers in a multi-modal network and distinct providers have different Bargaining powers, we need to extend it to the multi-player case and provide players with weights. 
	
	\subsubsection{Asymmetric Nash Bargaining solution} $\vec R =\arg\max (R_1 -d_1)^\theta (R_2-d_2)^{1-\theta} $ subject to $(R_1,R_2)\in F$ and $0<\theta<1$. This is a generalized version of Nash Bargaining solution with weight $(\theta, 1-\theta)$ on two players (page 36)\cite{muthoo1999bargaining}. 
	
	
	
	
	
	
	

	\subsection{Sharing Scheme}
	In this subsection, we introduce one fair profit-sharing scheme among service providers, which overcomes the shortcomings of the schemes mentioned above.
	
	Suppose $\mathbb{S}$ denotes the set of all service providers in the system (bus, subway, shared scooter, taxi, etc) and $\mathcal{S}=|\mathbb{S}|$. Let matrix $Z \in \{0,1\}^{L \times \mathcal{S}} $ denote the relationship between links and operators, where $z_{li}=1$ only when link $l \in \mathcal{L}$ is operated by service provider $i \in \mathbb{S}$.
	
	Before cooperation, each service provider operates individually with incentive $\vec J=0$. For service provider $i$, it has profit $t_i = \sum_{l \in \mathcal{L}} z_{li} f^\phi_l \big( \pi_l( f^\phi_l) \big)$, where $f^\phi_l$ is the equilibrium flow under cost function $C$ and $\vec J=0$. Let $\vec t = (t_1, \ldots, t_\mathcal{S})$ be the vector of disagreement payoffs, i.e., the payoff each provider gets if they fail to cooperate or do not reach agreement. 
	
	After introducing the link incentive $\vec J^*$, we have the new equilibrium flow $\vec f^*$ under the new cost function $C(\vec f)+\vec J^*$. Let $R_{c}= \vec {f}^{*\transpose} \big( \pi(\vec f^*) +\vec J^* \big )$ be the total profit of the system and we want to find a fair allocation scheme for providers.  Let $\vec R = (R_1, \ldots, R_\mathcal{S})^T$ be the payoff vector and the payoff space 
	\beqq{
		F=\left \{\vec R \in \Re^\mathcal{S} \Big| \sum_{i \in \mathbb{S}} R_i = R_c, R_i \geq 0, \forall i \in \mathbb{S} \right  \}
	}
	which is convex and compact. Let $H$ be the set of payoff vectors in $F$ not dominated by any other payoff vector in $F$ (the upper-right boundary of $F$) and $H \subset F$ is the efficiency frontier of $F$.

	The bargaining problem is then: Given $F$ and $\vec t$ and assuming that all service providers act rationally, what kind of solution $\vec R$ will bargaining eventually reach? 
	
	In order to set a fair bargain, the bargaining solution must satisfy the following axioms \cite{myerson2013game}:
	\begin{enumerate}
		\item Strong efficiency (Pareto optimality): there should be no other feasible allocation that is better than the solution for one player and not worse for other players;
		\item Individual rationality: neither player should get less than he could get in disagreement;
		\item Scale covariance: if one bargaining problem can be derived from another bargaining problem by increasing affine utility transformations, then the solution should be derived by the same transform; 
		\item Independence of irrelevant alternatives: eliminating feasible alternatives that would not have been chosen should not affect the solution;
	\end{enumerate}

	{The following theorem provides the closed-form solution to the profit-sharing scheme, which ensures each provider will have a profit increase after cooperation. The formal proof is stated in \ref{sec: proof_thm_2}. }
	
	\begin{theorem}\label{thm:sharing}
		If $\vec \theta = (\theta_1, \ldots, \theta_\mathcal{S})$ is the weight vector for providers and $\theta_i>1$, there is a unique bargaining solution $\vec R^*$, where the allocated profit for each service provider $i \in \mathbb{S}$ is
		\beq{
			R^*_i=\frac{\theta_i}{\sum_{j \in \mathbb{S}} \theta_j} \left(R_c-\sum_{j \in \mathbb{S}} t_j\right) +t_i
		}
		which means, the equilibrium payoff will converge to $\vec R^*$ by bargaining.
	\end{theorem}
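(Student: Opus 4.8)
The plan is to identify $\vec R^*$ as the $\mathcal S$-person \emph{asymmetric Nash bargaining solution} with disagreement point $\vec t$, i.e.\ as the maximizer of the weighted Nash product $\prod_{i\in\mathbb S}(R_i-t_i)^{\theta_i}$ over the feasible set $F$ (the natural $\mathcal S$-player generalization of the two-player weighted solution recalled earlier, and the unique solution characterized by the four axioms). Given this, I would proceed in three steps: (i) prove that this maximizer exists and is unique; (ii) compute it in closed form via the first-order conditions, obtaining exactly the stated formula; and (iii) check that the formula satisfies strong efficiency, individual rationality, scale covariance, and independence of irrelevant alternatives, so that it is indeed \emph{the} bargaining solution. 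The profit-increase statement then falls out of the closed form.

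For existence and uniqueness I would pass to the logarithm and maximize $\Phi(\vec R)=\sum_{i\in\mathbb S}\theta_i\log(R_i-t_i)$ over the compact convex set $F_{\vec t}=\{\vec R:\sum_{i}R_i=R_c,\ R_i\ge t_i\ \forall i\}$. Since $t_i\ge 0$, we have $F_{\vec t}\subseteq F$. The function $\Phi$ has diagonal Hessian with strictly negative entries, hence is strictly concave, and it tends to $-\infty$ as any $R_i\downarrow t_i$; therefore, provided $F_{\vec t}$ has nonempty relative interior, $\Phi$ attains its maximum at a unique point $\vec R^*$ lying in the relative interior, so $R_i^*>t_i$ for every $i$. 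Nonempty relative interior is exactly the condition $\sum_{i\in\mathbb S}t_i<R_c$; here I would invoke the Remark (cooperation under the optimal incentive does not decrease total profit, $R_c\ge\sum_i t_i$), observing that when the incentive strictly improves profit the inequality is strict, while in the degenerate case $R_c=\sum_i t_i$ the claimed formula collapses to $\vec R^*=\vec t$ and there is nothing to prove.

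Next I would write the Lagrangian $\sum_i\theta_i\log(R_i-t_i)-\lambda\big(\sum_i R_i-R_c\big)$; since $\vec R^*$ is interior, stationarity gives $\theta_i/(R_i^*-t_i)=\lambda$ for all $i$, hence $R_i^*-t_i=\theta_i/\lambda$. Summing over $i\in\mathbb S$ and using $\sum_i R_i^*=R_c$ yields $\lambda=\big(\sum_{j}\theta_j\big)/\big(R_c-\sum_j t_j\big)$, and substituting back gives
\[
R_i^*=\frac{\theta_i}{\sum_{j\in\mathbb S}\theta_j}\Big(R_c-\sum_{j\in\mathbb S}t_j\Big)+t_i ,
\]
which is the asserted closed form. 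Strict concavity of $\Phi$ guarantees this critical point is the unique global maximizer, and the formula shows the weights enter only through the normalized ratios $\theta_i/\sum_j\theta_j$, so the hypothesis $\theta_i>1$ is used only to ensure positive weights.

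Finally, I would verify the axioms so that $\vec R^*$ qualifies as \emph{the} bargaining solution: strong efficiency is immediate since $\sum_i R_i^*=R_c$ puts $\vec R^*$ on the efficiency frontier $H$ of $F$; individual rationality is $R_i^*-t_i=\frac{\theta_i}{\sum_j\theta_j}(R_c-\sum_j t_j)>0$, which simultaneously establishes the profit increase for every provider; scale covariance and independence of irrelevant alternatives are the standard invariances of the weighted Nash product (its maximizer transforms correctly under positive affine rescalings of each coordinate, and shrinking $F$ without removing $\vec R^*$ leaves the maximizer unchanged), for which I would cite the $n$-person asymmetric extension of Nash's characterization (\cite{muthoo1999bargaining}, \cite{myerson2013game}, \cite{nash1950bargaining}); uniqueness of the bargaining solution then follows from uniqueness of the Nash-product maximizer. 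The step I expect to require the most care is the boundary bookkeeping in (i)--(ii): confirming that the optimization region for the log-objective is the full-dimensional simplex $\{\sum_i R_i=R_c,\ R_i\ge t_i\}$ (using $t_i\ge 0$ and the strict inequality $R_c>\sum_i t_i$) so that the interior KKT point is legitimate and no corner of $F$ can beat it; once that is settled, the remainder is routine convex optimization together with quoting the axiomatic characterization.
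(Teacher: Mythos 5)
Your proposal is correct and follows essentially the same route as the paper: both identify $\vec R^*$ as the maximizer of the weighted log Nash product $\sum_{i\in\mathbb{S}}\theta_i\log(R_i-t_i)$ over the budget set with disagreement point $\vec t$, use $\sum_i t_i < R_c$ (from the profit-maximization of Section 5) for feasibility, and extract the stated closed form from the first-order conditions. The only differences are cosmetic — you use a Lagrange multiplier where the paper eliminates one variable, and you spell out the concavity, boundary, and degenerate-case bookkeeping that the paper handles by citing the axiomatic characterization.
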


	\begin{remark}
		If $\theta_i=\theta_j, \forall i,j \in \mathbb{S}$, then $\vec R^*$ is the Nash Bargaining solution. 
	\end{remark}

	$\vec R^*$ is the payoff configuration generated by a limit-totally-efficient stationary subgame perfect equilibrium from Theorem 3 \cite{okada2010nash}.

	\section{Numerical Simulation}
	In this section, we use two simulations to show the efficiency of our algorithm, especially the convergence of the traffic flow and incentive. The first simulation is based on a small network in Chengdu (China) with real-world data, and it mainly shows the process of incentive designing and profit sharing scheme among different providers. The second simulation shows that this algorithm can be implemented on a large scale network with multiple origin-destination pairs, where the data is randomized from some distributions.
	
	In order to simplify the optimization in function $\psi$, we consider the case where link profit functions are linear with respect to link flows.  The following remark shows that $\psi$ becomes quadratic programming under this assumption, which significantly simplifies the calculation and reduces the time for simulation.

		Throughout this section, we assume that the link profit function is $\pi(h(\vec J))=Q h(\vec J)+\vec \pi_0$, where $Q$ is a $L$-by-$L$ matrix. In this situation,  the function $\psi$ is the solution to the quadratic programming as follows:
		\beq{ \begin{split}
				\psi(\vec f_k, \vec J_k) &=  \arg\max_{\vec x \in\Re^L}\; \vec x^T H \vec x+ b^T \vec x \text{ subject to } B^T \vec x \leq 0, \vec J_{min} \leq \vec x \leq \vec J_{max},\\
				H=&-\nabla g^T(C(\vec f_k)+\vec J_k)(Q\nabla g(C(\vec f_k)+\vec J_k)+I),\\
				b=& -\nabla g^T(C(\vec f_k)+\vec J_k)Q \vec f_k + \nabla g^T(C(\vec f_k)+\vec J_k)Q \nabla g(C(\vec f_k)+\vec J_k)\vec J_k \\
				&-\nabla g^T(C(\vec f_k)+\vec J_k) \vec \pi_o  
				-\nabla g^T(C(\vec f_k)+\vec J_k) Q^T \vec f_k \\
				&+ \nabla g^T(C(\vec f_k)+\vec J_k) Q^T \nabla g(C(\vec f_k)+\vec J_k) \vec J_k -\vec f_k +\nabla g(C(\vec f_k)+\vec J_k) \vec J_k.
			\end{split}
		}

	\subsection{Small network with profit sharing}
	In this subsection, we use a simple example in Chengdu, China, with multi-class passengers to show the efficiency of our algorithm through the convergence of the traffic flow and link incentive. The associated profit sharing scheme is also discussed. To be specific, the simulation is based on the paths demonstrated in Figure \ref{fig:hyperpath}, containing 6 nodes, 9 routes (hyperpaths), and 12 edges. 
	
	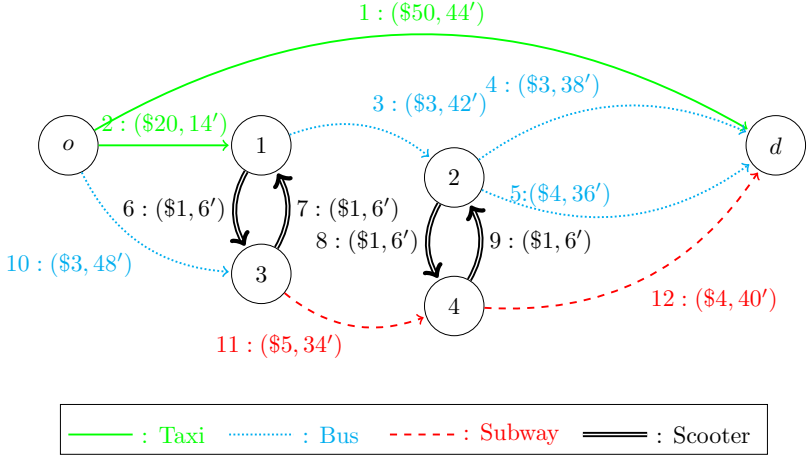
\begin{figure}[bth]
		\centering
		\scalebox{0.9}{\begin{tikzpicture}[scale=0.95]

\node[state] at (0, 0) (o) {$o$};

\node[state] at (11, 0) (d) {$d$};

\node[state] at (3, 0) (1) {$1$};
\node[state] at (3, -2) (3) {$3$};

\node[state] at (6, -0.5) (2) {$2$};
\node[state] at (6, -2.5) (4) {$4$};

\draw[every loop, thick]
(o) edge[solid, green, bend left,auto = left] node {$1:(\$50, 44')$} (d)
(o) edge[solid, green, auto = left] node {$2:(\$20, 14')$} (1)
(o) edge[densely dotted, bend right, cyan, auto = right] node {$10:(\$3, 48')$} (3)
(1) edge[densely dotted, cyan, bend left, auto = left] node {$3:(\$3, 42')$} (2)
(2) edge[densely dotted, cyan, bend left, auto = left] node {$\textit{\space} 4:(\$3, 38')$} (d)
(2) edge[densely dotted, cyan, bend right, auto = left] node {5:$(\$4, 36')$} (d)
(3) edge[dashed, red, bend right, auto = right] node {$11:(\$5, 34')$} (4)
(4) edge[dashed, red, bend right, auto = right] node {$12:(\$4, 40')$} (d)
(1) edge[double, black, bend right, auto = right] node {$6:(\$1, 6')$} (3)
(3) edge[double, black, bend right, auto = right] node {$7:(\$1, 6')$} (1)
(2) edge[double, black, bend right, auto = right] node {$8:(\$1, 6')$} (4)
(4) edge[double, black, bend right, auto = right] node {$9:(\$1, 6')$} (2);


\coordinate (L) at (0,-4.5);
\draw[solid, green, thick] (L) -- + (1,0)               node[right] {: Taxi};
\draw[densely dotted, cyan, thick]  ([xshift=25mm] L) -- + (1,0) node[right] {: Bus};
\draw[dashed, red, thick]  ([xshift=50mm] L) -- + (1,0) node[right] {: Subway};
\draw[double, black, thick]  ([xshift=80mm] L) -- + (1,0) node[right] (LL) {: Scooter};
\node[draw, yshift=0.5ex, fit=(L) (LL)] {};
\end{tikzpicture}}
		\caption{\label{fig:hyperpath2} Multi-modal Transportation Network.}
	\end{figure}

	All nine routes (hyperpaths) are shown below: 
	\begin{enumerate}
		\item $o \to d$
		\item $o \to 1 \to 2 \to d$
		\item $o \to 1 \to 3 \to 4 \to d$
		\item $o \to 1 \to 3 \to 4 \to 2 \to d$
		\item $o \to 1 \to 2 \to 4 \to d$
		\item $o \to 3 \to 1 \to 2 \to d$
		\item $o \to 3 \to 1 \to 2 \to 4 \to d$
		\item $o \to 3 \to 4 \to 2 \to d$
		\item $o \to 3 \to 4 \to d$
	\end{enumerate}
	where the en-route choice probability from node $2$ to node  $d$ is 0.4 and 0.6 for link 4 and link 5, respectively.
	
	We assume that there are two classes of passengers. 
	\begin{itemize}
		\item Group A: Passengers can accept routes with up to three links, i.e., route 1, 2, 9.
		\item Group B: Passengers can accept all nine routes listed above.
	\end{itemize}
	For example, most senior citizens will be in group A as it is not easy for them to ride a scooter on route 6 to 9. Each group of passengers can have its own demand function, choice model, and route utility function.
	
	For each link, we define the link cost $c_i(\vec f)=c_{i,p}(\vec f)+ c_{i,t}(\vec f)$, which contains both the price and the travel time on that link. To simplify the problem, we let $c_{i,p}(\vec f)=c_{i,price}+J_i$ and $c_{i,t}(\vec f)=\gamma (0.02 f_i+ c_{i.time})$ where  $\gamma=0.5$ and travel time is increasing with link flow $f$. The actual $c_{i,price}$ and $c_{i,time}$ are listed in Figure \ref{fig:hyperpath2} and they are captured through Baidu Maps and \cite{srinivasan2020built}, where these nodes are Railway Station (origin), transfer hubs in Downtown and  Chengdu Airport (Destination) respectively, in Chengdu, China.

		

	We use the logit choice model defined in Equation \ref{equ:pro} for both groups. The utility function is $V(\vec c)=v_0-B^T \vec c$ where $v_0=200$ is a constant $\vec c= C(\vec f)$ is the link cost function. Although the single payment of the entire trip through the platform will increase the route utilities, we do not consider it here. We will show that the platform is attractive even without the single payment benefit.
	The satisfactory function is $S(\vec v)=\max_i v_i$ for both groups.  The demand functions are different for groups with different elasticity. For group A, the demand function is $D_a(s_a)=60 \tanh(s_a)$ while for group B it is $D_b(s_b)=40 \tanh(s_b)$. 
	
	We further define the profit function per person on each link. As noted before Equation \ref{equ:psi}, we assume that the link profit function is linear with respect to link flow, where $\pi(\vec f)=Q\vec f+ \vec \pi_o$, where
	\beqq{
		\begin{split}
			Q &=diag[-0.2,-0.2,0.05,0.05,0.05,-0.03,-0.03,-0.03,-0.03,0.05,0.05,0.05],\\
			\vec \pi_o & =[10,4,0.5,0.5,0.75,0.7,0.7,0.7,0.7, 0.5, 2, 1.6]^T.  
		\end{split}
	}
	This is because taxi and scooter companies need to rebalance when the flow on one link becomes large while the operating cost for bus and subway is fixed.

	\subsubsection{No incentive}
	If $\vec J^\phi=0$, we can compute the equilibrium flow through the first five lines in Algorithm \ref{alg:two} and Equation \ref{equ:multiclass}. As shown in Table \ref{table:j2}, the equilibrium flow $\vec f^\phi$, link cost $C(\vec f^\phi)$ and link profit $\pi(\vec f^\phi)$ are listed. The total profit in the system is {\$}230.34. For passengers in group A, the route flow vector is $\vec x^\phi_a=[19.58,7.30,0,0,0,0,0,0,6.94]^T$ with total demand $d^\phi_a=33.82$. For passengers in group B, route flow vector is $\vec x^\phi_b=[12.58,4.69,0.08,0.01,0.01,0.08,0.00,0.63,$ $4.46]^T$ with demand $d^\phi_b=22.55$. Most passengers choose route 1 with a direct taxi to the destination.

	\subsubsection{Add incentive}
	In this subsection, we assume that taxi, bus, subway, and scooter companies cooperate and form an all-in-one platform, which can add incentives on each link and provide passengers multi-modal services with one payment per trip. 
	
	We further set the minimum incentive on each link as $J_{min}=-3$, which prevents the cost on each link from being negative. The maximum incentive $J_{max}=3$ for each link {to make the feasible area include the global optimal solution}.   Following Algorithm \ref{alg:two} and Equation \ref{equ:multiclass} with $\alpha_k=\frac{1}{10+0.001k^{0.8}}$ and $\beta_k=\frac{1}{10+k^{0.9}}$, we have the optimal incentive $\vec J^*$, equilibrium flow $\vec f^*$, link cost $C(\vec f^*)+\vec J^*$ and link profit $\pi(\vec f^*)+\vec J^*$ in Table \ref{table:j2}. The outcome $\vec J^*$ satisfies the constraint $B^\transpose \vec J \leq 0$ in Inequality \eqref{equ: constraint}, which makes the cost on each route less than or equal to the original route cost before incentivizing.
	
	\begin{table}[h!]
		\centering
		\begin{tabular}{|m{3em} | c c c | c c c c |}
			\hline
			Link &$\vec f^\phi$ &$C(\vec f^\phi)$ &$\pi(\vec f^\phi)$ &$\vec J^*$  &$\vec f^*$ &$C(\vec f^*)+\vec J^*$ &$\pi(\vec f^*)+\vec J^*$  \\
			\hline
			1 &32.16 &72.32 &3.57       &-0.00  &5.15   &72.05  &8.97\\
			2 &12.10 &27.12 &1.58       &-0.35  &2.11   &26.68  &3.23\\
			\hline
			3 &12.09 &24.12 &1.10       &0.16   &1.90   &24.17  &0.75\\
			4 &5.09 &22.05  &0.75       &0.32   &0.80   &22.33  &0.86\\
			5 &7.63 &22.08  &1.13       &0.10   &1.21   &22.11  &0.91\\
			\hline
			6 &0.09 &4.00   &0.70       &-0.23  &0.22   &3.77   &0.46\\
			7 &0.09 &4.00   &0.70       &1.23   &0.01   &5.23   &1.93\\
			8 &0.01 &4.00   &0.70       &2.04   &0.00   &6.04   &2.74\\
			9 &0.64 &4.00   &0.68       &1.69   &0.11   &5.69   &2.38\\
			\hline
			10 &12.13 &27.12 &1.11      &-1.58  &49.98  &25.92  &1.42\\
			11 &12.13 &22.12 &2.61      &-1.30  &50.19  &21.20  &3.21\\
			12 &11.50 &24.11 &2.17      &-1.85  &50.08  &22.65  &2.26\\
			\hline
		\end{tabular}
		\caption{Equilibrium flow, link cost {(\$)} and link profit{(\$)} before/after incentive} \label{table:j2}
	\end{table}
	
	The total profit after cooperation in the system is {\$}401.90, which is an increase of 74\% from original profit. For passengers in group A, the route flow vector is $\vec x^*_a=[3.11, 1.14, 0, 0, 0, 0, 0, 0, 30.09]^T$ with total demand $d^*_a=34.34$. For passengers in group B, route flow vector is $\vec x^*_b=[2.04, 0.75, 0.22, 0.00, 0.00, 0.01, 0.00,$ $ 0.11,19.77]^T$ with demand $d^*_b=22.90$. Most of passengers now choose route 9.
	
	{The profit will increase for small $J_{max}$ as well, e.g. when $J_{max} = 0.1$ and $J_{min}= -0.1$, the profit increase from \$230.34 to \$246.64, with link incentive $\vec J = [0, -0.01, 0.1, -0.1, -0.1, -0.1, -0.1, -0.1, -0.1, -0.1, -0.1, -0.1 ]$ }
	
	\subsubsection{Comparison and Convergence}
	Comparing two cases with and without incentives in Table \ref{table:j2}, we find that 1) Some passengers shift from pure taxi service (route 1) to multi-modal routes and more passengers use bus and subway services.  As the difference between route flow vector is $\vec x^*_a+\vec x^*_b-\vec x^\phi_a-\vec x^\phi_b=[-27.00, -10.10, 0.14, -0.01, -0.01, -0.01, 0.00,$ $ -0.53, 38.45]^T$, less passengers choose route 1 or 2 and more choose routes 9, as a result of incentives $\vec J^*$. 2) The system profit increases from {\$}230.34 to {\$}401.90, because of both demand shifting and induced demand (more passengers enter the market). 3) Although some link costs increase, the cost of each route decreases as both $B_a^T \vec J^* \leq 0$ and $B_b^T \vec J^* \leq 0$ hold, which is defined as fairness constraint before.

	Define the flow error as $\delta \vec f= ||g(C(\vec f_k)+\vec J_k)- \vec f_k||_2$ and the incentive error as $\delta \vec J = ||\psi(\vec f_k, \vec J_k)- \vec J_k||_2$. We show the convergence of our algorithm as both errors go to zero with iteration, as shown in Figure \ref{fig:error_f} and \ref{fig:error_j}. 
	The profit of the platform increases with iteration, as shown in Figure \ref{fig:profit}. 
	
	\begin{figure}[!ht]
		\centering
		\scalebox{0.16}{\includegraphics{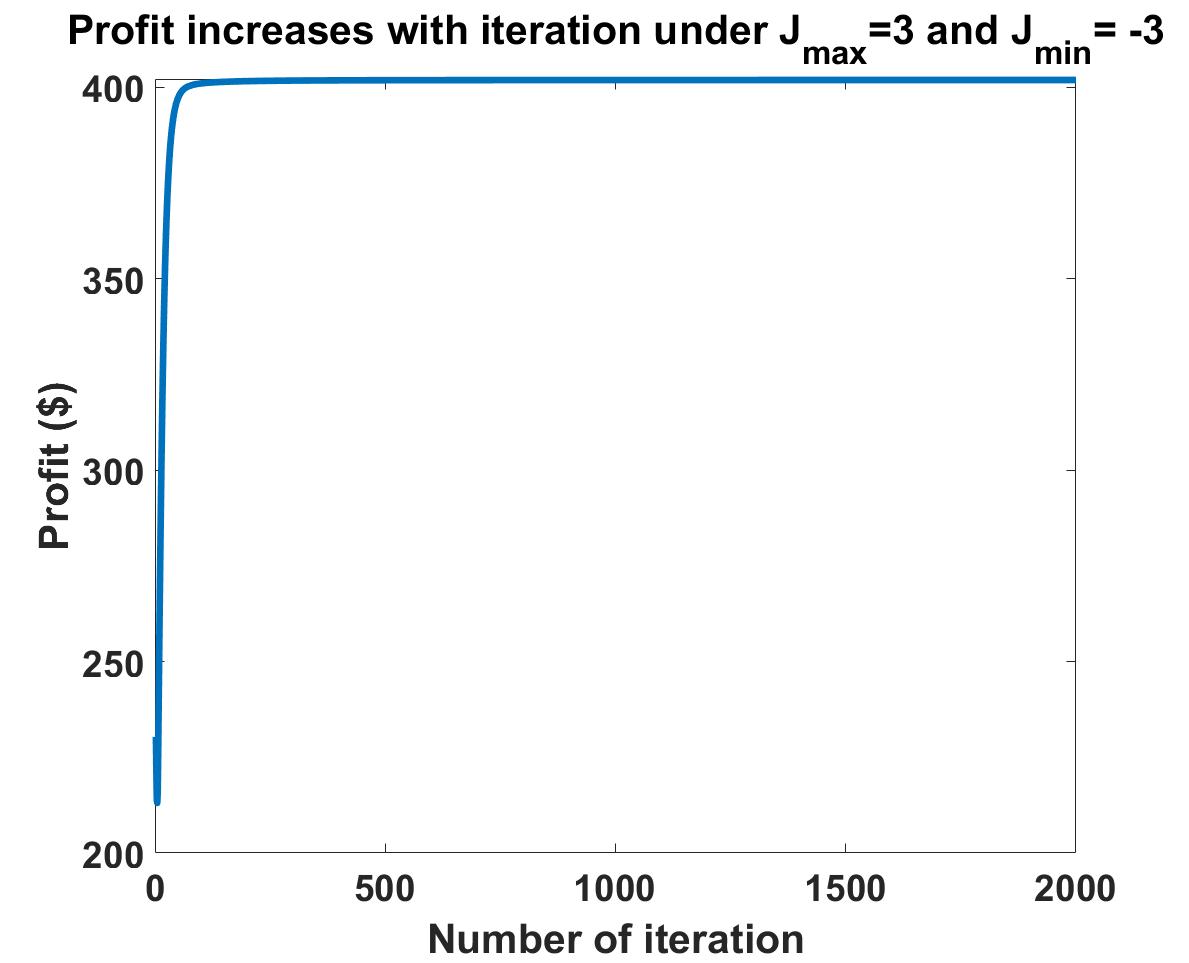}}
		\caption{Profit of the platform increases with iteration in Algorithm \ref{alg:two}}\label{fig:profit}
	\end{figure} 
	\begin{figure}[!ht]
		\centering
		\scalebox{0.17}{\includegraphics{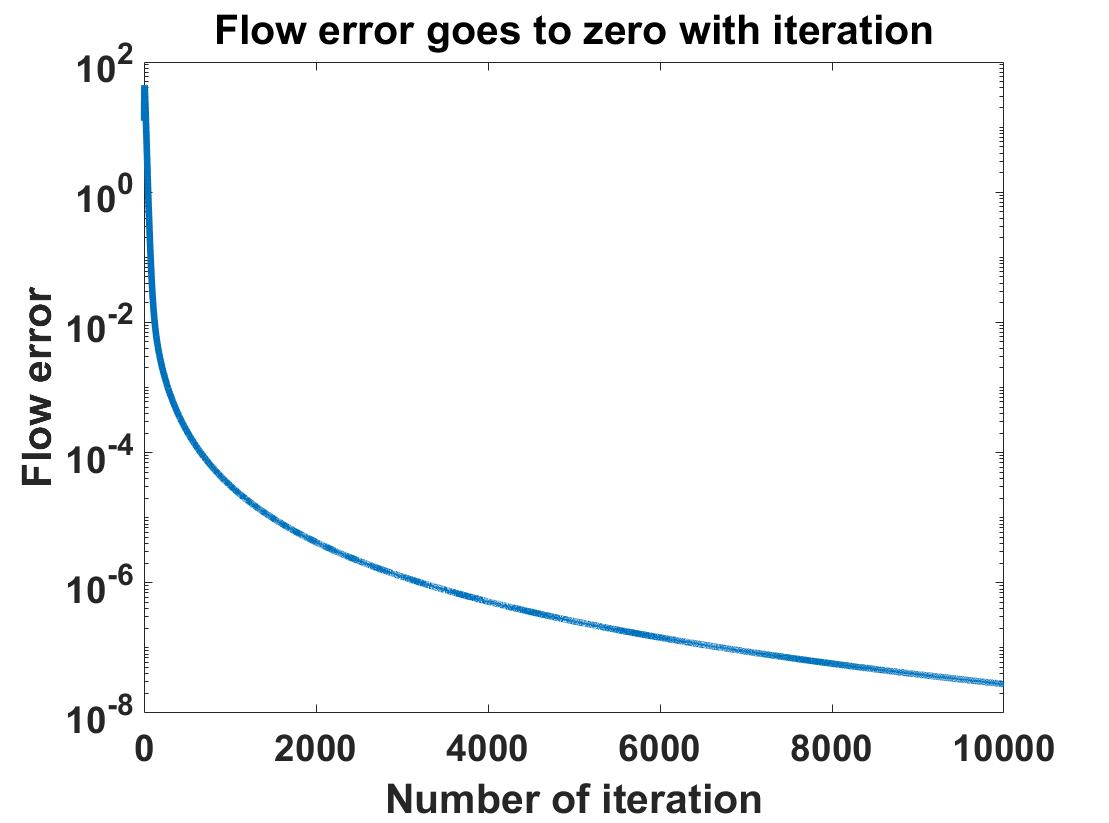}}
		\caption{Flow error $\delta \vec f= ||g(C(\vec f_k)+\vec J_k)- \vec f_k||_2$ goes to zero with iteration in Algorithm \ref{alg:two} and thus the link flow converges.}\label{fig:error_f}
	\end{figure} 
	\begin{figure}[!ht]
		\centering
		\scalebox{0.16}{\includegraphics{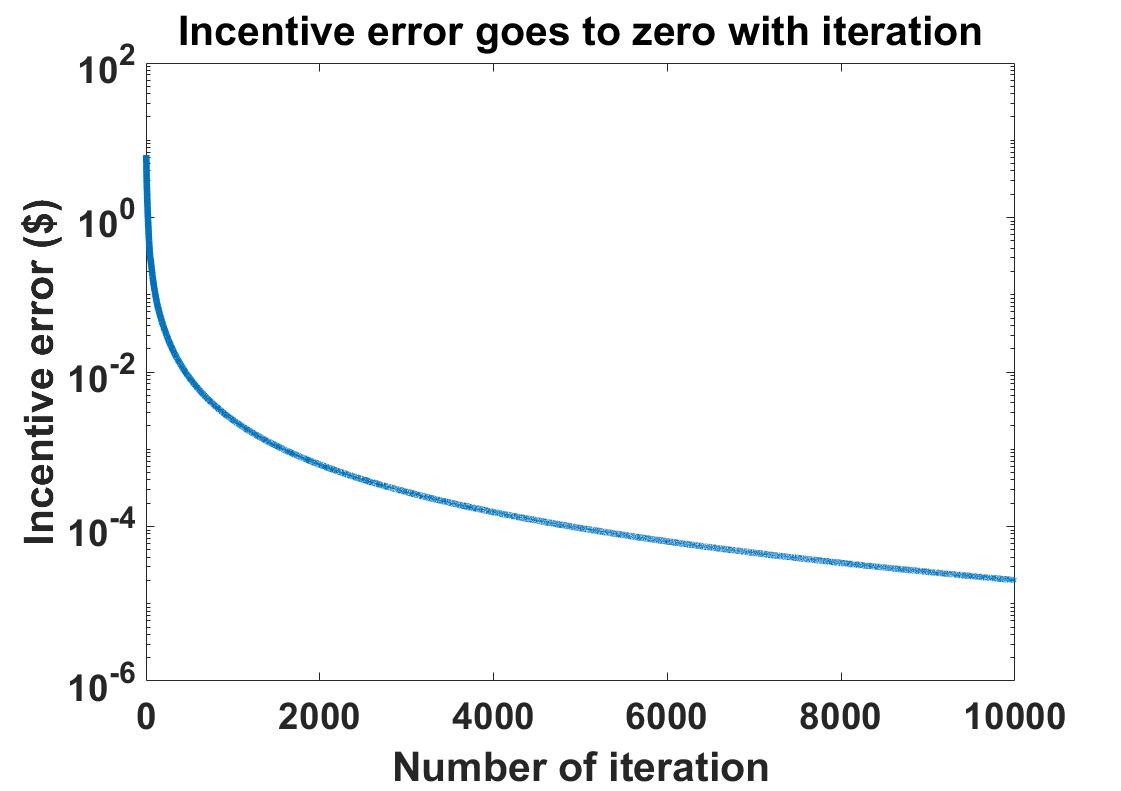}}
		\caption{Incentive error $\delta \vec J = ||\psi(\vec f_k, \vec J_k)- \vec J_k||_2$ goes to zero with iteration in Algorithm \ref{alg:two} and thus the incentive converges. }\label{fig:error_j}
	\end{figure}

	\subsubsection{Profit sharing with Asymmetric Nash Bargaining solution}
	In this subsection, we discuss the profit sharing scheme among the four providers. We assume that the weights on taxi, bus, scooter, and subway companies are $\vec \theta=[70, 60, 1, 200]$, which is set by the platform organizer. According to theorem \ref{thm:sharing}, there is a unique bargaining solution $\vec R^*=[170.15,70.35,1.08,160.32]$ out of the total profit {\$}401.90.
	
	\begin{table}[!ht]
		\centering
		\begin{tabular}{l l l l l l l}
			\hline
			&Provider &\shortstack{Profit \\(before)} & \shortstack{Profit \\(after)} &Compensation &Final profit &Increase \\
			\hline
			&taxi &133.87 &53.02 &+117.13 &170.15 &+36.28\\
			&bus &39.25 &74.31 &-3.96 &70.35 &+31.10\\
			&scooter &0.57 &0.36 &+0.72 &1.08 &+0.51\\
			&subway &56.65 &274.21 &-113.89 &160.32 &+103.67\\
			\hline
			&Total &230.34 &401.90 &0 &401.90 &+171.56\\
			\hline
		\end{tabular}
		\caption{Profit Sharing among providers via Asymmetric Nash Bargaining solution}\label{Table:sharing}
	\end{table}
	
	After applying incentive $\vec J^*$, more passengers join multi-modal routes and there is a profit increase for bus and subway. Although taxi and scooter experience profit declines, their final profits will still increase as they receive compensations from other providers. This profit sharing scheme ensures that each provider receives profit growth, which attracts more providers to join the platform.
	
	\subsection{Large scale network with multiple origin-destination pairs}
	In this subsection, we test our algorithm on a randomly generated large scale network with multiple random origin-destination pairs, which consists of 500 nodes, 1988 links, and 100 O-D pairs with elastic demand.
	
	\subsubsection{Graph generating: Barabasi-Albert Model}
	We generate a scale-free network with 500 nodes based on Barabasi-Albert Model \cite{barabasi1999emergence}. The preferential attachment mechanism is based on the idea that the more connected a node is, the more likely it is to receive new links, which is similar to a hub in transportation networks. We assume that each new node has two links to previous nodes on average.
	After generation, the graph has 1988 directed links. The constant cost of each link is generated uniformly from 10 to 20. As shown in Figure \ref{fig:power}, the distribution of degrees in the generated graph follows a power law, thus this graph is scale-free. The generated directed graph is shown in Figure \ref{fig:visual}.
	
	
	\begin{figure}[!ht]
		\centering
		\scalebox{0.235}{\includegraphics{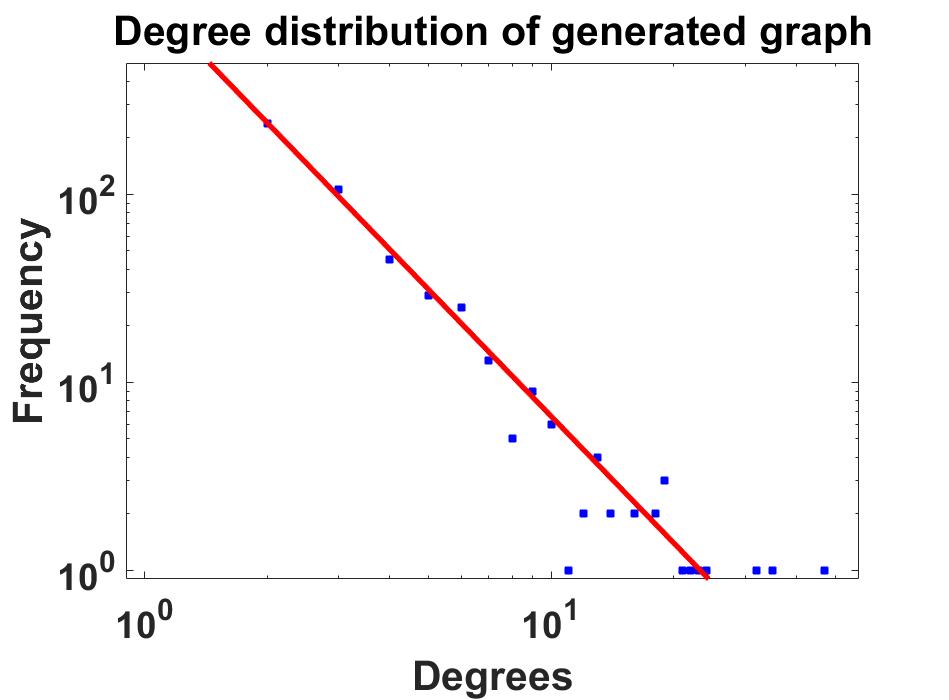}}
		\caption{The degree distribution of scale free graph, where $P(k) \sim k^{-3}$ follows a power law, i.e., $log(P)$ is linear with respect to $log (\textit{Degree})$.}
		\label{fig:power}
	\end{figure} 
	
	\begin{figure}[!ht]
		\centering
		\scalebox{0.3}{\includegraphics{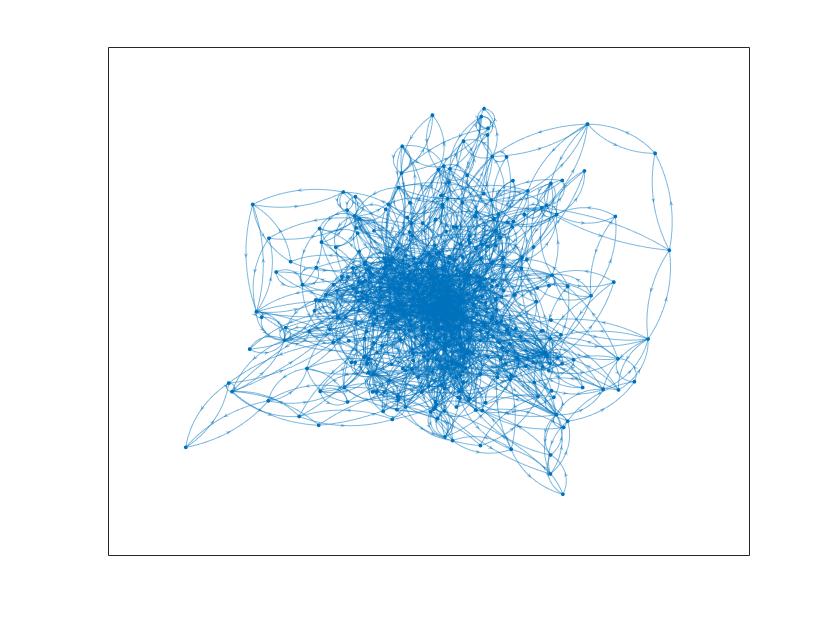}}
		\caption{Visualization of the generated network with 500 nodes and 1988 links}
		\label{fig:visual}
	\end{figure} 
	
	\subsubsection{Hyperpath generating: K-shortest path}
	We generate 100 Origin-Destination (OD) pairs where both origins and destinations are randomly selected from a uniform distribution from 1 to 100. For each OD pair, we use the K-shortest path algorithm to generate 3 hyperpaths, i.e., the shortest 3 hyperpaths from origin to destination. We do not consider other paths because customers will choose high-cost hyperpaths with an extremely low probability. We store the graph and route information in a 1988-3-100 matrix, which indicates the relationship between links, routes, and OD pairs.

	\subsubsection{Two time-scale algorithm: multiple Origin-Destination pairs}
	For each OD pair $n$, we have demand function $D^{(n)}$, satisfaction function $S^{(n)}$, choice probability $P^{(n)}$, utility function $V^{(n)}$ and  route link relationship $B^{(n)}$. In order to simplify the simulation, for each OD pair $n$, we choose $D^{(n)}=10*a_n tanh(b_n s_n)$, $S_n=max(V_n/200)$, Logit choice model, $V_n=V_0-B_n^\transpose C(\vec f)$, where $a_n$ and $b_n$ are uniformly distributed in interval $[0.9,1.1]$, $V_0 = 200$, $B_n$ (link-route relationship for OD pair $n$) comes from the 1988-3-100 matrix defined above.
	
	Besides, the actual link flow is the sum of flow under each ODpair on that link,i.e., $\vec f=  \sum_{n=1}^{100} B^{(n)}  P^{(n)} $ $(V^{(n}(C(\vec f)))) D^{(n)}(S^{(n)}(V^{(n)}(C(\vec f) )))$. In order to run Algorithm \ref{alg:two}, we only need to replace the function $g(C(\vec f_k + \vec J_k))$with $ \sum_{n=1}^{100} g_n(C(\vec f_k + \vec J_k))$ where $g_n$ is the traffic assignment function for ODpair $n$. Therefore, $\nabla g(\vec J_k)$ needs to be replaced by $\sum_{n=1}^{100}\nabla g_n(\vec J_k)$. As different ODpairs share the same link profit functions, link cost functions and link flows, other parameters in function $\psi$ remain the same.
	
	For the link cost function $C(\vec f)=Q_c \vec f + \vec c_o $, { we set $Q_c = diag(q_c)$ to make cost increasing with respect to flow, where $q_c \in \{0.005, 0.01, 0.015\}$ denotes three different modes.} The link constant cost $\vec c_o$ is uniformly distributed in interval $[10,20]$.  For link profit function $\pi(\vec f)=Q\vec f+ \vec \pi_o$, we set  $\vec \pi_o=\vec c_o/2 $  and $Q$ is diagonal matrix where each element follows uniform distribution between -0.1 and 0.1, in order to simplify the simulation.
	
	\subsubsection{Comparison and Convergence}
	
	By setting stepsize as $\alpha_k=\frac{1}{10+0.001k^{0.8}}$ and $\beta_k=\frac{1}{100+0.8k^{0.9}}$, Algorithm \ref{alg:two} converges to the optimal $J^*_c$. As there are only 100 OD pairs, 577 out of 1988 links have nonzero link flow. In the following, we compare the incentives and flows on these 577 links without including zero-flow links.
	
	As shown in Figure \ref{fig:hist_j}, the link incentives $J^*_c$ range from -3 to 3, which is restricted by $J_{max}=3$ and $J_{min}=-3$. When $J$ is negative, it reduces the cost and leads to a higher flow on this link. On the other side, it becomes a toll and reduces flow when $J$ is positive. 
	
	\begin{figure}[!ht]
		\centering
		\scalebox{0.22}{\includegraphics{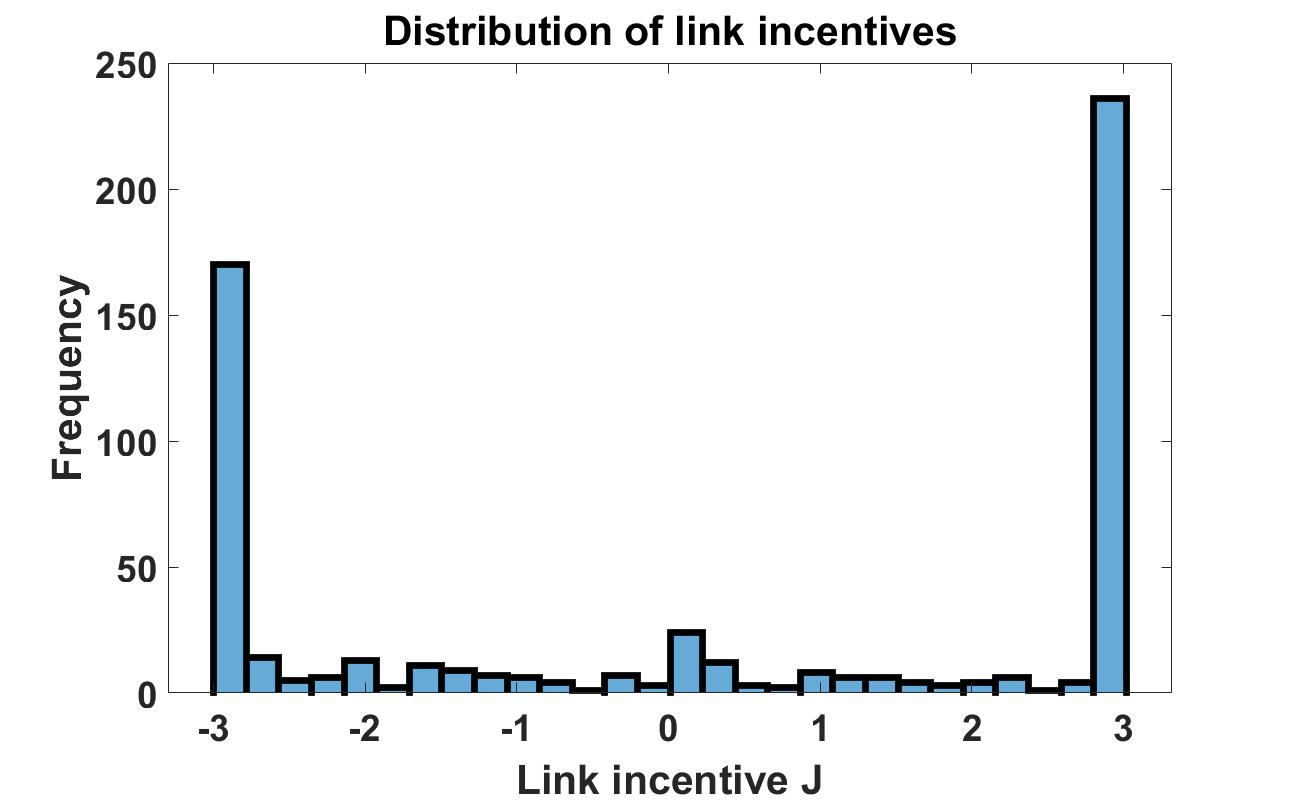}}
		\caption{Distribution of optimal link incentives $J^*$}\label{fig:hist_j}
	\end{figure}

	As shown in Figure \ref{fig:hist_f_change}, the difference of link flow is more likely to be positive, which indicates that the total demand increases after incentivizing. The change of flow mainly lies in the interval $[-2,3]$, which will not lead to a dramatic change of system and our linear approximations hold.

	\begin{figure}[!ht]
		\centering
		\scalebox{0.24}{\includegraphics{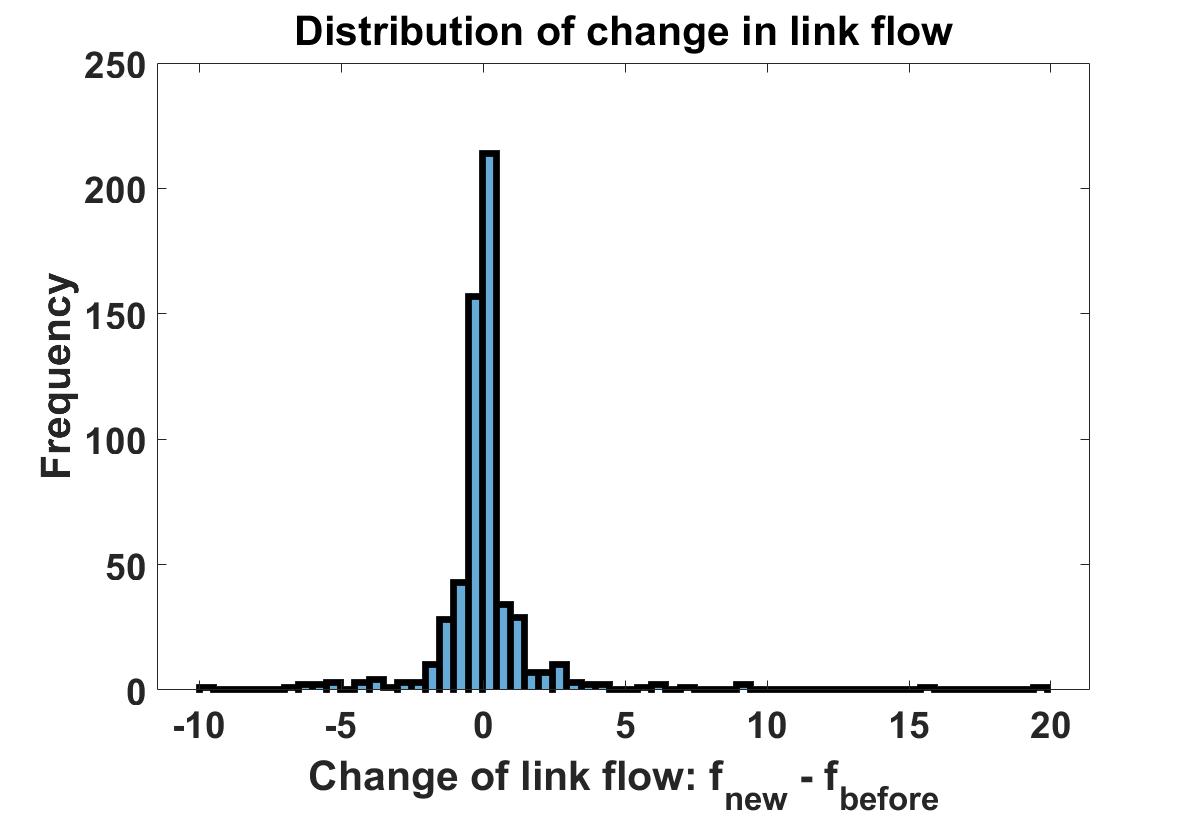}}
		\caption{The distribution of changes in link flow}\label{fig:hist_f_change}
	\end{figure}

	Similar to the small network of Chengdu, the profit of the platform increases with iteration from 17520 to 20296, as shown in Figure \ref{fig:profit_graph}. The convergence of our algorithm for multiple O-D pairs large scale network can be validated since both flow error and incentive error go to zero with iteration, as shown in Figure \ref{fig:error_f_graph} and \ref{fig:error_j_graph}. Similarly, under the Asymmetric Nash Bargaining solution in Theorem \ref{thm:sharing}, each provider can have a profit increase, which is omitted here.

	\begin{figure}[!ht]
		\centering
		\scalebox{0.22}{\includegraphics{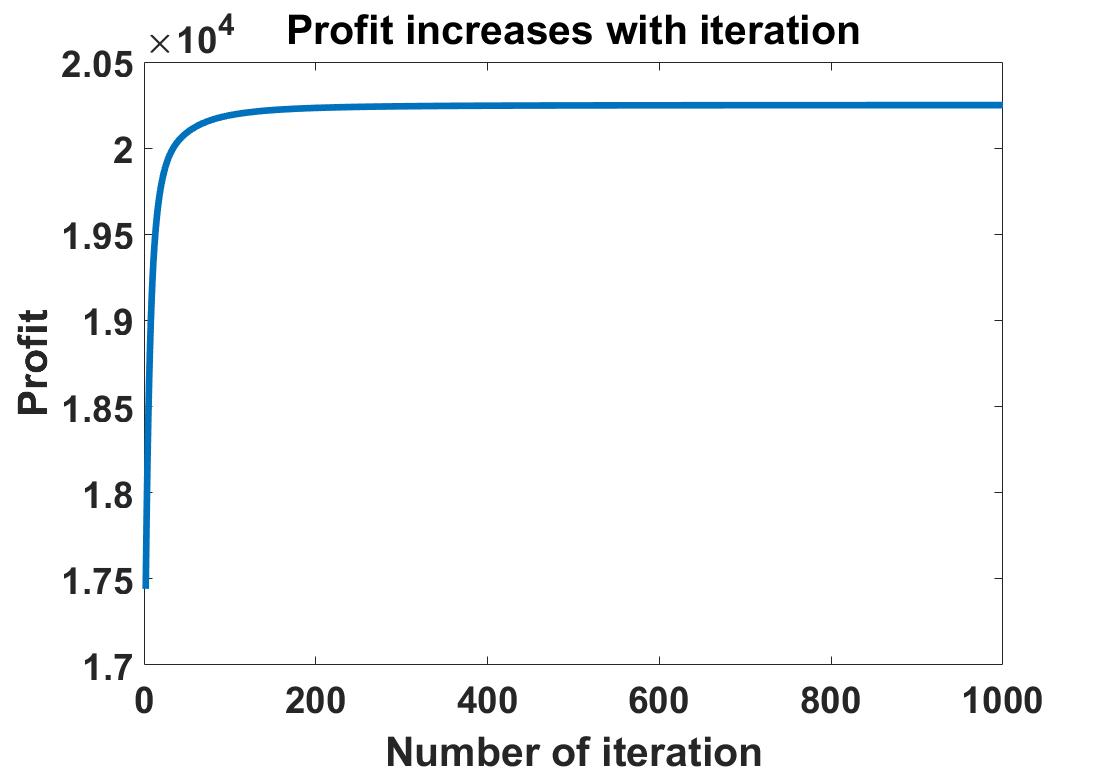}}
		\caption{Profit of the platform increases with iteration in Algorithm \ref{alg:two}}\label{fig:profit_graph}
	\end{figure} 
	\begin{figure}[!ht]
		\centering
		\scalebox{0.2}{\includegraphics{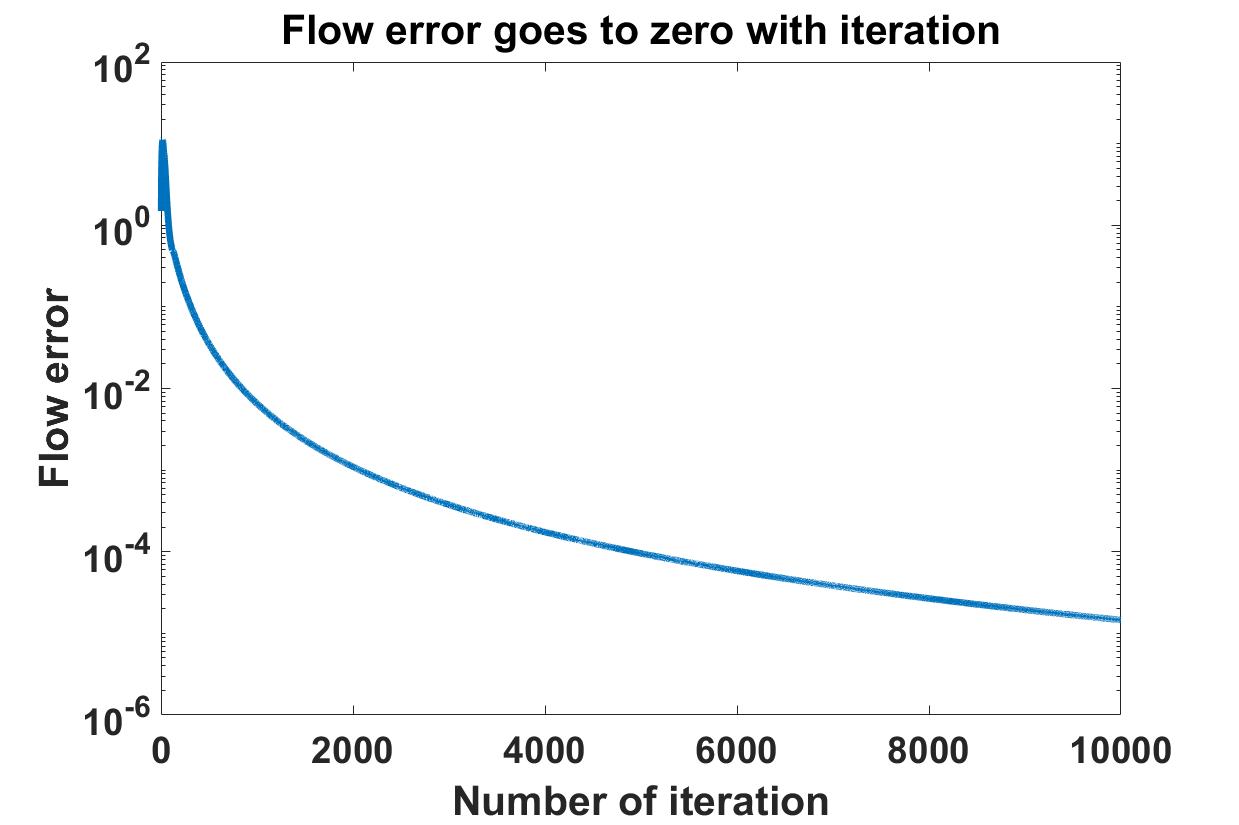}}
		\caption{Flow error $\delta \vec f= ||g(C(\vec f_k)+\vec J_k)- \vec f_k||_2$ goes to zero with iteration in Algorithm \ref{alg:two} and thus the link flow converges.}\label{fig:error_f_graph}
	\end{figure} 
	\begin{figure}[!ht]
		\centering
		\scalebox{0.2}{\includegraphics{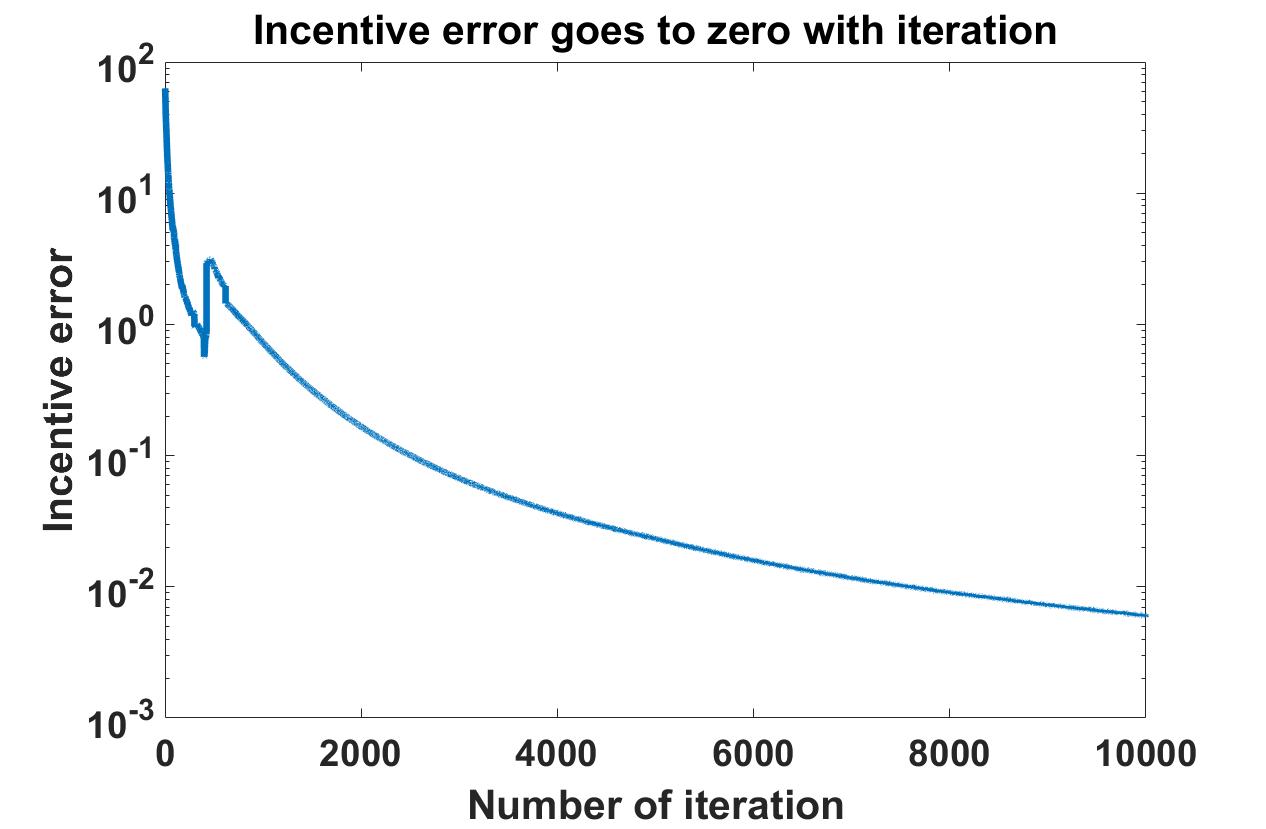}}
		\caption{Incentive error $\delta \vec J = ||\psi(\vec f_k, \vec J_k)- \vec J_k||_2$ goes to zero with iteration in Algorithm \ref{alg:two} and thus the incentive converges. }\label{fig:error_j_graph}
	\end{figure}

	{
		\subsection{Algorithm efficiency}
		We run all the simulation based on a traditional PC with an Intel i7-7700 CPU. The fist simulation in Section 7.1 takes 25 seconds and the second simulation in Section 7.2 takes 8 hours since it has 1988 links and 100 OD pairs. In addition, if we change parameters of cost function, the running time does not change. If the topology changes, say from 12 links in first simulation to 1988 links in second simulation, the run time increases significantly.
	}
	
	\section{Conclusion}
	In this paper, we present a theoretical framework to analyze the market aspect of cooperation between various competing modes of transportation. Specifically, we model a multi-modal transportation network where various service providers are encouraged to cooperate. Through an all-in-one platform, one payment for the entire multi-modal trip is simplified and the prices can be adjusted for each route depending on the demand. Under the multi-class elastic demand assumption, we use hyperpath to model the passengers' assignments and determine the equilibrium flow using the fixed point formulation presented in \cite{cantarella1997general}. After introducing incentives to encourage cooperation between various service providers and reduce the price for each route, a two time-scale stochastic approximation algorithm is developed for designing incentives and computing the associated equilibrium flow.  Based on an asymmetric Nash Bargaining Solution, we design a fair profit-sharing scheme for competing providers. Numerical simulations are provided to show the effectiveness of our algorithm on a large-scale network with multiple origin-destination pairs. Furthermore, we would like to test our framework and algorithm in real-world large-scale simulators.

	\section*{Acknowledgment}
	This research is supported by the Ford Motor Company under the University Alliance Project.

	\bibliography{sample_electric}
	
	\newpage
	
	\appendix
	
	
	{
		
		\section{List of notations} \label{sec: notations}
		In this section we provide the full list of notations.

		\subsection{Link, Path, Route}
		\begin{itemize}
			\item Link: $\mathcal{L} = \{l_1, l_2, \ldots, l_L\}$, $L =|\mathcal{L}|$
			\item Link flow: $\vec f = [f_1, \ldots, f_L]^T$, $\vec f = B\vec x$
			\item Link cost: $\vec c = C(\vec f)= [c_1(\vec f), \ldots, c_L(\vec f)]^T$, after incentivizing: $C(\vec f) + \vec J$
			\item Path: $n=(\mathcal Z_n, \mathcal L_n) \in \mathcal N$,  $N= |\mathcal N|$
			\item Link-path incidence matrix: $A=[a_{ln}] \in \{0,1 \}^{L*N}$, $a_{ln}=1$ when link $l$ belongs to path $n$ 
			\item Route (hyperpath): $m = (\mathcal Z_m, \mathcal L_m, \mathcal \pi_m), \mathcal \pi_m \in [0,1]^{|\mathcal L_M|}$
			\item Path-route incidence matrix: $E = [e_{nm}] \in \{0.1 \}^{N * M}$
			\item Link-route incidence matrix:$B =AE = [b_{lm}]\in \{0.1 \}^{L * M}$, $b_{lm}=\sum_n a_{ln} e_{nm}$
			\item Route flow: $\vec x = [x_1, \ldots, x_M] \in \Re^M$, $\vec x = \vec p d$
			\item Route utility: $\vec v = V(\vec c) = -\beta B^T \vec c - \vec c_s$
		\end{itemize}
		
		\subsection{Route probability, Demand, Incentive}
		\begin{itemize}
			\item Route choice probability: $\vec p= P(\vec V) = [P_1(\vec v), \ldots, P_M(\vec v)]^T$
			\item Satisfactory function: $s = S(\vec v) = E [\max_m \{V_m + \epsilon_m \}]$
			\item Demand function: $d = D(s)$
			\item Traffic assignment (link flow): $g(\vec c) = BP(\vec V(\vec c)) D(S(V(\vec c)))$
			\item Link incentive (charges to customers): $ \vec J = [j_1, \ldots, j_L]^T \in \Re^L$, $j_l <0$ is a discount for customers and $j_l >0$ is a toll for customers 
			\item Mapping from link incentive to link equilibrium flow: $h(\vec J) = \vec f^*$, where $\vec f^* = g( C(\vec f^*) + \vec J)$ is the equilibrium flow
			\item Link profit after incentivizing: $\pi(\vec f)+ \vec J$
		\end{itemize}
		
		\subsection{Profit Sharing}
		\begin{itemize}
			\item Prior profit(disagreement payoff): $\vec t=(t_1, \ldots, t_S)$
			\item Total profit after link incentivizing: $R_c = \vec f^{*T} (\pi(\vec f^*) +\vec J^* )$ 
			\item Profit after incentivizing: $\vec R = [R_1, \ldots, R_S]^T \in \Re^S$
			\item Weight vector for service providers: $\vec \theta = (\theta_1, \ldots, \theta_S)$
		\end{itemize}
	}

	\section{Proofs in Section 4} \label{sec: proof_sec_4}

	\subsection{Proof of Proposition \ref{prop:cont}}  \label{sec: proof_pro_1}
	
	\begin{proof}
		Since $\vec f_o \in \Re^L$ is the fixed point of $\hat{T}$, we have $\phi(\vec J_o, \vec f_o)=0$. As $\phi$ is locally one-to-one, its partial derivative with respect to $\vec f$ is invertible. By the Implicit function theorem \cite{kumagai1980implicit}, there exists an open set $U \in \Re^L$ containing
		$\vec J_o$ and there exists a continuous function $h: U \rightarrow \Re^L$
		such that $h(\vec J_o) = \vec f_o$ and $\phi(\vec J, h(\vec J))=0, \forall \vec J \in U$.
		i.e., for a new link incentive $\vec J' \in U$, the continuous map $h(\vec J')$
		maps into the new equilibrium flow $\vec{f}^{'}$ since $\phi(\vec J',h(\vec J'))=0$.
	\end{proof}
	
	\subsection{Proof of Proposition \ref{prop:lip}}  \label{sec: proof_pro_2}
	\begin{proof}
		Let $\hat{T}_1(\vec f)=g(C(\vec f)+\vec J_1)$ and $\hat{T}_2(\vec f)=g(C(\vec f)+\vec J_2)$. Define $\vec f^*_1$ and $\vec f^*_2$ as the fixed point of $\hat{T}_1$ and $\hat{T}_2$ respectively.
		We have
		\beq{ \begin{split}
				||\vec f^*_1-\vec f^*_2||_\mathcal{F}&=||\hat{T}_1(\vec f^*_1)-\hat{T}_2(\vec f^*_2)||_\mathcal{F} \\
				&\leq ||\hat{T}_1(\vec f^*_1)-\hat{T}_1(\vec f^*_2)||_\mathcal{F}+||\hat{T}_1(\vec f^*_2)-\hat{T}_2(\vec f^*_2)||_\mathcal{F}\\
				&\leq \alpha||\vec f^*_1-\vec f^*_2||_\mathcal{F}+||g(C(\vec f_2^*)+\vec J_1)-g(C(\vec f_2^*)+\vec J_2)||_\mathcal{F}\\
				&\leq \alpha||\vec f^*_1-\vec f^*_2||_\mathcal{F} + \beta ||\vec J_1 -\vec J_2 ||_\mathcal{J} 
			\end{split}
		}
		
		As $\vec f^*_1=h(\vec J_1)$ and $\vec f^*_2=h(\vec J_2)$, we have $||h(\vec J_1)-h(\vec J_2)||_\mathcal{F}\leq \frac{\beta}{1-\alpha} ||\vec J_1-\vec J_2||_\mathcal{J}$, which completes the proof.
	\end{proof}

	\subsection{Proof of Proposition \ref{prop:mono}} \label{sec: proof_pro_3}
	\begin{proof}
		For $i \in \mathcal L$, define $g_i: \Re^L  \rightarrow \Re$ as the mapping from link cost  to $i$-th link flow, $g_i(\vec c)=[g(\vec c)]_i$. As $\vec f^*$ is the fixed point of $T$, we have $f^*_{1,i}= g_i(C_1(\vec f^*_1))$ and $f^*_{2,i}= g_i(C_2(\vec f^*_2))$.
		
		From condition 1 in Lemma \ref{lemma:unique}, the link cost-flow function $C_1 (\vec f)$ is monotone non-decreasing, i.e. $[C_1(\vec f^*_1) - C_1(\vec f^*_2)]^\transpose [\vec f^*_1 - \vec f^*_2]\geq 0$.
		
		We want to prove the result by contradiction. Assume that on $l$-th link, $f^*_{1,l} \geq f^*_{2,l}$. From the assumption about link cost function in this proposition, we have $c_{1,l}(\vec f^*_2) - c_{2,l}(\vec f^*_2) \geq 0$. Therefore, $[f^*_{1,l} - f^*_{2,l}] [c_{1,l}(\vec f^*_2) - c_{2,l}(\vec f^*_2)] \geq 0$ holds.
		
		Then,
		\beq{ \begin{split}
				&[g(C_1(\vec f^*_1)) - g(C_2(\vec f^*_2))]^\transpose [C_1(\vec f^*_1) - C_2(\vec f^*_2)] \\
				&= \sum_{i \in \mathcal{L}} [f^*_{1,i} - f^*_{2,i}] [c_{1,i}(\vec f^*_1) - c_{2,i}(\vec f^*_2)] \\
				&= \sum_{i \neq l, i \in \mathcal{L}} [f^*_{1,i} - f^*_{2,i}] [c_{1,i}(\vec f^*_1) - c_{2,i}(\vec f^*_2)] +[f^*_{1,l} - f^*_{2,l}] [c_{1,l}(\vec f^*_1) - c_{2,l}(\vec f^*_2)]\\
				&= \sum_{i \neq l, i \in \mathcal{L}} [f^*_{1,i} - f^*_{2,i}] [c_{1,i}(\vec f^*_1) - c_{1,i}(\vec f^*_2)] \\
				&\textit{\space} \textit{\space} \textit{\space}+[f^*_{1,l} - f^*_{2,l}] [c_{1,l}(\vec f^*_1) - c_{1,l}(\vec f^*_2) + c_{1,l}(\vec f^*_2) - c_{2,l}(\vec f^*_2)]\\
				&= \sum_{i \in \mathcal{L}} [f^*_{1,i} - f^*_{2,i}] [c_{1,i}(\vec f^*_1) - c_{1,i}(\vec f^*_2)] +[f^*_{1,l} - f^*_{2,l}] [c_{1,l}(\vec f^*_2) - c_{2,l}(\vec f^*_2)]\\
				&=[C_1(\vec f^*_1) - C_1(\vec f^*_2)]^\transpose [\vec f^*_1 - \vec f^*_2]+[f^*_{1,l} - f^*_{2,l}] [c_{1,l}(\vec f^*_2) - c_{2,l}(\vec f^*_2)] \\
				&\geq 0
			\end{split}
		}
		
		From the quasi strict monotonicity in Lemma 3 \cite{cantarella1997general}, we have $[g(C_1(\vec f^*_1)) - g(C_2(\vec f^*_2))]^\transpose [C_1(\vec f^*_1) - C_2(\vec f^*_2)] < 0$, which contradicts the above result and completes the proof, i.e., $f^*_{1,l} < f^*_{2,l}$
		
	\end{proof}

	\section{Proofs in Section 5} \label{sec: proof_sec_5}
	
	\subsection{Proof of Lemma \ref{lem:f} }
	\begin{proof} \label{sec: proof_lem_4}
		The convergence follows \cite[Theorem 3]{cantarella1997general} by replacing $\nabla \varphi(\vec f)$ with $C(\vec f)- C(\vec f^*)+\vec J_o$.

		If there exists $\vec f' \neq \vec f^*$ such that $\vec f'=h(\vec J_o)$, then both $\vec f'$ and $\vec f^*$ are the equilibrium flow under link cost function $C(\vec f)+ \vec J_o$, which contradicts the uniqueness from Lemma \ref{lemma:unique}.
	\end{proof}

	\subsection{Proof of Theorem \ref{thm:twotimescale}} \label{sec: proof_thm_1}
	\begin{proof}
		For the two time-scale algorithm, since $\beta_k=o(\alpha_k)$, the iteration \ref{equ:algo_j} proceeds at a slow rate compared to the iteration \ref{equ:algo_f}. According to Theorem 1.1 \cite{borkar1997stochastic}, $(\vec J_k, \vec f_k)$ converges to $(\vec J^*, \vec h(\vec J^*))$ on the set $\{\sup_k \vec J_k < \infty, \sup_k \vec f_k < \infty\}$ if the following two assumptions hold:
		\begin{enumerate}
			\item The o.d.e.
			\beq{
				\label{equ:ode_f}
				\dot{\vec f} = g(C(\vec f)+\vec J) -\vec f
			}
			has a unique global asymptotically stable equilibrium $\vec f^*=h(\vec J)$ such that $h$ is Lipschitz.
			\item The o.d.e.
			\beq{
				\label{equ:ode_j}
				\dot{\vec J} = \psi(h (\vec J), \vec J) -\vec J
			}
			has a unique global asymptotically stable equilibrium $\vec J^*$.
		\end{enumerate}
		
		According to the Lemma \ref{lem:f}, for any $\vec J_0$, there exists a unique $\vec f^*=h(\vec J_o)$ and $h$ is Lipschitz from Proposition \ref{prop:lip}. That is, we only need to show that the o.d.e. in \eqref{equ:ode_j} has a unique stable solution.

		Let $\lambda(\vec J)=\psi(h(\vec J),\vec J)$, then we have the optimal incentive $\vec J^*=\lambda(\vec J^*)$. Define a function $\varphi: \mathbb{J} \rightarrow \Re $ such that $\varphi(\vec J)>0$ and $\nabla \varphi(\vec J)=h(\vec J^*)-h(\vec J)$. Define $\dot{\varphi}= \langle \nabla \varphi(\vec J),\lambda(\vec J)-\vec J \rangle = \nabla \varphi(\vec J)^T[\lambda(\vec J)-\vec J], \forall \vec J \in \mathbb{J}$.
		
		As $\nabla \varphi(\vec J^*)=0$ and $\lambda(\vec J^*)=\vec J^*$,  $\nabla \varphi(\vec J^*)^T[\lambda(\vec J^*)-\vec J^*]=0$ holds. From Assumption 1, we have
		$\nabla \varphi(\vec J)^T[\lambda(\vec J)-\vec J]=(h(\vec J^*)-h(\vec J))^T [\psi(h(\vec J),\vec J)-\vec J]<0,$ for all $ \vec J \neq \vec J^*$.  Therefore, we can conclude that $\dot{\varphi} \leq 0, \forall \vec J \in \mathbb{J}$, with strict inequality when $\vec J \neq \vec J^*$.
		
		Let $\mathbb{A} \subset \mathbb{J}$ be the set of points of where $\dot{\varphi}$ vanishes, i.e., $\mathbb{A}= \{\vec J \in \mathbb{J}: \dot{\varphi}=0 \}$. Let $\mathbb{M} \subset \mathbb{A}$ be the largest set contained in $\mathbb{A}$ that is invariant under \eqref{equ:ode_j}, thus $\mathbb{M}=\{\vec J^*\}$.
		
		By LaSalle's invariance principle \cite[Theorem 3.4]{khalil2002nonlinear}, we have the conclusion that $\lim_{t \rightarrow \infty} dist(\vec J(t), M) = 0$, i.e., every solution starting in $\mathbb{J}$ approaches $M$ as $t \rightarrow \infty$. As $M$ is singleton and $\mathbb{J}= \Re^L$, we conclude that there exists a unique global asymptotically stable equilibrium $\vec J^*$.
	\end{proof}

	\section{Proofs in Section 6} \label{sec: proof_sec_6}
	
	\subsection{Proof of Theorem \ref{thm:sharing}} \label{sec: proof_thm_2}
	\begin{proof}
		From Section \ref{sec:pricing}, the total profit after cooperation $R_c$ is maximized, which leads to the inequality $\sum_{i \in \mathbb{S}} t_i < R_c$, i.e., the total platform profit after cooperation $R_c$ is larger than the profit before cooperation $\sum_{i \in \mathbb{S}} t_i$, where $t_i$ is the profit of provider $i$ before cooperation.
		
		As $\vec \theta = (\theta_1, \ldots, \theta_\mathcal{S})$ is the weight vector, there is a unique bargaining solution $\vec R^*$ that satisfies the above axioms (page 38) \cite{binmore1992fun}  (page 36)\cite{muthoo1999bargaining} (Asymmetric Nash Bargaining solution), where 
		\beqq{ \begin{split}
				\vec R^* &=\underset{\vec R \in F}{\arg \max} \sum_{i \in \mathbb{S}} \theta_i \log(R_i -t_i)\\
				\text{such that \space} &  R_i \geq t_i \forall i \in \mathbb{S}
			\end{split}
		}
		
		Since $\sum_{i \in \mathbb{S}} t_i < R_c$, the feasible region of above optimization problem is not empty.
		
		Let $y_i=R_i-t_i$, then it is equivalent to solve
		\beqq{ \begin{split}
				&\arg \max \sum_{i \in \mathbb{S}} \theta_i \log(y_i)\\
				\textit{s.t. \space} & \sum_{i \in \mathbb{S}} y_i = R_c -\sum_{i \in \mathbb{S}} t_i\\
				& y_i \geq 0, \forall i \in \mathbb{S}
			\end{split}
		}
		
		For the objective function, we have $\arg \max \sum_{i \in \mathbb{S}} \theta_i \log(y_i)=\arg \max \prod_{i \in \mathbb{S}} y_i^{\theta_i}$.
		From the equality constraint, we have $y_\mathcal{S}=R_c-\sum_{i}^\mathcal{S} t_i- \sum_{i=1}^{\mathcal{S}-1} y_i$,
		we define $\vec y=(y_1, \ldots, y_{\mathcal{S}-1})$ and function $\varphi(\vec y)=\prod_{i=1}^\mathcal{S} y_i^{\theta_i}$.
		
		From the first-order optimality $\nabla \varphi(\vec y^*)=0$ (the domain is convex, $\nabla^2 \varphi(\vec y)$ positive definite), we have 
		\beqq{
			\frac{y_1^*}{\theta_1}=\frac{y_\mathcal{S}^*}{\theta_\mathcal{S}}, \ldots, \frac{y_{\mathcal{S}-1}^*}{\theta_{\mathcal{S}-1}}=\frac{y_\mathcal{S}^*}{\theta_\mathcal{S}}
		}
		Therefore, we have 
		\beqq{
			y_i^*=\frac{\theta_i}{\sum_{j \in \mathbb{S}} \theta_j}\left(R_c-\sum_{j \in \mathbb{S}} t_j\right)
		}
		and $\vec R^*$ follows $R_i^*=y_i^*+t_i$.
	\end{proof}

\end{document}